\theoremstyle{definition}
\newtheorem{example}{Example}
\theoremstyle{theorem}
\newtheorem{problem}{Problem}
\newtheorem{lemma}{Lemma}
\newtheorem{proposition}{Proposition}
\newtheorem{theorem}{Theorem}
\newtheorem{corollary}{Corollary}
\newtheorem{definition}{Definition}
\theoremstyle{definition}
\theoremstyle{remark}
\newtheorem*{remark}{Remark}
\newcommand{\Ac}{\mathcal{A}}
\newcommand{\Cc}{\mathcal{C}}
\newcommand{\Dc}{\mathcal{D}}
\newcommand{\Ec}{\mathcal{E}}
\newcommand{\Hc}{\mathcal{H}}
\newcommand{\Ic}{\mathcal{I}}
\newcommand{\Jc}{\mathcal{J}}
\newcommand{\Kc}{\mathcal{K}}
\newcommand{\Mc}{\mathcal{M}}
\newcommand{\Qc}{\mathcal{Q}}
\newcommand{\Rc}{\mathcal{R}}
\newcommand{\Sc}{\mathcal{S}}
\newcommand{\As}{\mathscr{A}}
\newcommand{\Bs}{\mathscr{B}}
\newcommand{\Ds}{\mathscr{D}}
\newcommand{\Es}{\mathscr{EL}}
\newcommand{\Ns}{\mathscr{N}}
\newcommand{\Os}{\mathscr{O}}
\newcommand{\Rs}{\mathscr{R}}
\newcommand{\Ss}{\mathscr{S}}
\newcommand{\Vs}{\mathscr{V}}
\newcommand{\Ws}{\mathscr{W}}
\newcommand{\Ys}{\mathscr{Y}}
\newcommand{\Cb}{\mathbb{C}}
\newcommand{\Eb}{\mathbb{E}}
\newcommand{\Nb}{\mathbb{N}}
\newcommand{\Rb}{\mathbb{R}}
\newcommand{\Bf}{\mathfrak{B}}
\newcommand{\Df}{\mathfrak{D}}
\newcommand{\Hf}{\mathfrak{H}}
\newcommand{\Sf}{\mathfrak{S}}
\newcommand{\CE}{\mathbb{E}|}
\newcommand{\SE}{\mathbb{J}|}
\newcommand{\ro}{\mathcal{R}_0}
\newcommand{\eo}{\mathcal{J}_0}
\newcommand{\rs}{\mathcal{R}}
\newcommand{\es}{\mathcal{J}}
\newcommand{\um}{\frac{1}{2}}
\newcommand{\zero}{\mathbb{0}}
\newcommand{\im}{{\rm Im}}
\newcommand{\fix}{{\rm fix}}
\newcommand{\tr}{{\rm tr}}
\newcommand{\diag}{{\rm diag}}
\newcommand{\Span}{{\rm span}}
\newcommand{\supp}{{\rm supp}}
\newcommand{\alg}{{\rm alg}}
\newcommand{\comp}{{\sim}}
\newcommand{\zentrum}{\mathcal{Z}}
\newcommand{\ket}[1]{\left| #1 \right>}
\newcommand{\bra}[1]{\left< #1 \right|}
\newcommand{\inner}[2]{\left< #1,#2 \right>}
\newcommand{\braket}[2]{\left< #1 \middle\vert #2 \right>}
\newcommand{\ketbra}[2]{\left\vert #1 \middle>\middle< #2 \right\vert}
\renewcommand{\check}{\widecheck}
\renewcommand{\bar}{\overline}
\renewcommand{\Tilde}{\widetilde}
\renewcommand{\Hat}{\widehat}
\title{\LARGE \bf 
Model Reduction for Quantum Systems:\\ Discrete-time Quantum Walks and Open Markov Dynamics
}
\author{Tommaso Grigoletto and Francesco Ticozzi
\thanks{T. Grigoletto and F. Ticozzi are with the Department of Information Engineering, University of Padova, Via Gradenigo 6, 35131 Padova, Italy. Emails: 
\href{mailto:tommaso.grigoletto@unipd.it}{\texttt{tommaso.grigoletto@unipd.it}},
\href{mailto:ticozzi@dei.unipd.it}{\texttt{ticozzi@dei.unipd.it}}. F.T. acknowledges funding from the European Union - NextGenerationEU, within the National Center for HPC, Big Data and Quantum Computing (Project No. CN00000013, CN 1,Spoke 10).}}
\begin{document}
\maketitle
\begin{abstract}
     A general approach to obtain reduced models for a wide class of discrete-time quantum systems is proposed. The obtained models not only reproduce exactly the output of a given quantum model, but are also guaranteed to satisfy physical constraints, namely complete positivity and preservation of total probability. A fundamental framework for exact model reduction of quantum systems is constructed leveraging on algebraic methods, as well as novel results on quantum conditional expectations in finite-dimensions. The proposed reduction algorithm is illustrated and tested on prototypical examples, including the quantum walk realizing Grover's algorithm. 
\end{abstract}
\begin{IEEEkeywords}
Model reduction; Quantum systems; Quantum conditional expectations; Quantum walks.
\end{IEEEkeywords}

\section{Introduction}

Finding simpler descriptions for a dynamical model that is too large or complex to study or simulate is arguably a fundamental task in many scientific fields. 
From a dynamical-system viewpoint, the problem has been studied extensively, aiming for both exact reductions, i.e. smaller models that are able to reproduce exactly the target evolution \cite{rosenbrock1970state, kalman1969topics, kalman}, and approximate ones \cite{balanced-truncation}, in the linear as well as in non-linear settings \cite{astolfi_moment_matching}. In many cases, however, one may want to reduce a given model while maintaining certain constraints that characterize the dynamics, e.g. make the model physically admissible or satisfy some locality constraints. This problem, in general, proves difficult to solve: for example, model reduction for linear dynamical systems with positivity constraints is still a partially open issue \cite{farina-benvenuti,benvenuti}.

In this work, we tackle the problem of finding simpler and physically-admissible models that reproduce the output of a wide class of discrete-time quantum dynamical systems. For these systems to be physically admissible, in addition to linearity, minimal requirements are positivity and the preservation of the total probability, i.e. we seek Completely-Positive, Trace-Preserving (CPTP) dynamics. Potential direct applications include alternative, ``compressed'' version of quantum walks \cite{szegedy2004quantum,magniez2007search,kempe2003quantum,portugal2013quantum} as well as reduced models and filters for digital quantum control \cite{Bolognani_Ticozzi, haroche2006exploring}.
Finding the minimal reduced model that preserves the physical constraints also allows for efficient simulations of dynamics on quantum computers \cite{barreiro2011open}. 
The problem of finding computationally-efficient representations is particularly relevant nowadays since the quantum computers available have very limited resources \cite{ma2020quantum}. In addition, such a reduction allows us to understand what are the essential degrees of freedom of a model and which instead can be removed. Moreover, one can use the reduced model to investigate the ``quantumness'' of the variables that cannot be discarded: by studying their algebraic structure one can understand if the system is intrinsically quantum, or it might be represented classically, or even with a hybrid quantum-classical model \cite{diosi2023hybrid, barchielli2023markovian, barchielli2024hybrid}. 

The problem of finding smaller physical representations of quantum systems is of course not new: available model reduction methods such as derivation of master equations, rotating-wave approximations, adiabatic elimination \cite{alicki-lendi,petruccione-open,adiabadic-elimination}, provide approximate models with a trade-off between the accuracy and the simplicity of the reduced model. 
We here focus instead on developing a framework for finding models that are capable of reproducing the output of a given model \textit{exactly}. While this request may seem too stringent, it leads to results and techniques that can be  adapted to approximate model reductions \cite{algebraic-adiabatic}. 

The models of interest in this work are discrete-time, time-invariant CPTP dynamics, paired with an output equation, which we call quantum discrete-time semigroup with output (QSO) models, as defined in Sec. \ref{sec:model}. As outputs of interest we consider linear functions of the state: these allow us to cover single-time (unconditional) probability of given events, expectations of  observables of interest, correlation functions, as well as reduced/marginal states. In particular, quantum walks and related algorithms can be recast as iterations of quantum CPTP dynamics, for which the output of interest is the probability distribution of a given observable \cite{venegas2012quantum, magniez2007search}.
In this paper, where substantial work is devoted to building the foundation for a systematic model-reduction framework, we do not explicitly include the evolution of states conditioned on previous measurements, i.e. quantum trajectories and quantum filters \cite{bouten2007introduction, benoist2023limit, monras2010hidden, Schonhuth2011}. 
The seemingly simpler problem of reproducing the single-time probabilities treated here is already challenging on its own, and has key applications of interest, including the analysis, reduction and simulation of quantum walks and quantum algorithms. In fact, this work has been partially motivated by the results of \cite{apers2018simulation, apers2021characterizing}, where the simulatability of quantum walks with classical resources has been used to successfully replicate their mixing speedup and understanding its origin. 
The results we develop in this work can be extended to this scenario adapting the approach of \cite{tac2023, ito1992identifiability} from classical Markov processes to non-commutative ones, as we detail in \cite{letters}, covering the class quantum hidden Markov models as introduced in \cite{monras2010hidden}. 
With respect to classical hidden Markov models, the quantum framework introduces some interesting peculiarities, showing that the minimal model for conditional dynamics could be either larger or smaller than the unconditional one derived here \cite{letters}.  Other extensions of the methods of this paper to continuous-time dynamics, with or without measurements, are discussed in the follow-up works \cite{grigoletto2025exactmodelreductioncontinuoustime, grigoletto2025quantummodelreductioncontinuoustime}. 

Note that the problem of model reduction is connected yet distinct from the (completely) positive realization problem \cite{farina-benvenuti,monras2016quantum}. In particular, the realization problem assumes to have access to some measured quantity of a system at hand from which one desires to reconstruct a model that explains the observed data while, in the model reduction problem, we start from a given physical model and we aim to find a smaller representation that is still physical. Combining both tasks, one can aim to construct a small, possibly minimal, positive realization that explains the observed data.

The method we are proposing to tackle exact model reduction for quantum systems hinges on key tools from linear system theory, namely reachability and observability analysis  and  realization theory \cite{kalman1969topics, rosenbrock1970state, kalman}. Our quantum reachability and observability analysis  is inspired by the one introduced for continuous-time, unitarily evolving controlled quantum systems \cite{dalessandro_quantum_2003}, extended to open system evolution.  Recently, it has been shown \cite{cdc2022} how these tools can be used to find a minimal linear reduction for discrete-time quantum systems, albeit the latter need not satisfy CPTP constraints. The system-theoretic ideas and these preliminary results are recalled in Sec. \ref{sec:linear_model_reduction}, after the notation, models, and problem are presented in Sec. \ref{sec:model}.

Note that \cite{kumar2014model} takes a similar approach to model reduction, but only for unitary evolution and pure initial conditions, while the output quantities of interest are not exploited. In this simpler setting, the fact that the reduced model retains its CPTP property is always guaranteed, as one is projecting an Hamiltonian onto an invariant subspace. As we shall see, when quantum channels and mixed states are considered ensuring this property is more involved.
To the authors' knowledge, no other systematic methods for exact CP model reduction for quantum systems are available.

The core idea of this work is to enlarge the minimal linear reduction subspaces to a c $*$-algebras, and thus to a quantum probability space \cite{maassen2010quantum, beny2015algebraic, kr1992introduction}. The reduction is then obtained using CPTP projections. Our construction exploits the theory of (finite-dimensional) quantum conditional expectations and their duals (CPTP projections), that we develop with some new results (see \cite{takesaki1979theory, blackadar2006operator,petz2007quantum}, and our essential review in Sec. \ref{sec:Conditional_expectations}). In particular, in Sec. \ref{sec:minimal_distorted_algebras}, we show how to find the minimal set that includes a given subspace and admits a CPTP projection. Remarkably, this (sub)problem is connected, and in certain cases equivalent, to the problem of finding the minimal sufficient algebra that allows to discern a parametric family of quantum states \cite{jenvcova2006sufficiency}, or to find the minimal class of states, containing certain ones, that can be left invariant by a CPTP map \cite{PhysRevA.66.022318}.

The main reduction algorithms are presented in Sec. \ref{sec:reduction_algorithms}.
Since classical Markov chains can be viewed as a discrete-time quantum systems restricted to an invariant  abelian algebra, the results presented here also cover and improve the existing one for the classical case. Compared to those proposed in \cite{tac2023},  for {\em classical Hidden Markov models, the results presented here differ in two significant ways. First, for non-commutative algebras, even the existence of CPTP projections is not guaranteed, and hence the construction we propose requires the foundational results presented in Sec. \ref{sec:minimal_distorted_algebras}. Second, we here consider an iterative model reduction algorithm. This allows us to overcome the technical difficulties of the non-iterative algorithm proposed in \cite{tac2023}, which are recalled in Sec. \ref{sec:reduction_algorithms}, and avoid the need for additional conditions in order to have effective reductions. 
Examples of application are discussed in Sec. \ref{sec:examples}, including the well-known quantum-walk implementation of  Grover's algorithm \cite{grover}.

\section{Models and problem definition}\label{sec:model}
\subsection{Notation}
In this work, we denote by $\mathcal{H}$ the (finite-dimensional Hilbert space $\mathcal{H}\simeq\mathbb{C}^n$ and by $\mathfrak{B}(\mathcal{H})\simeq\mathbb{C}^{n\times n}$ the set of bounded operators in $\mathcal{H}$. We then use capital letters to denote operators, $X\in\mathfrak{B}(\mathcal{H})$, and the $\dag$-superscript to denote their adjoints (as well as the transpose and conjugate of their matrix representation). $\mathfrak{H}(\mathcal{H})$ represents the set of self-adjoint operators, i.e. $\mathfrak{H}(\mathcal{H}) = \{X\in\mathfrak{B}(\mathcal{H})|X = X^\dag\}$. $\mathfrak{D}(\mathcal{H})$  denote  the set of density operators, i.e. $\mathfrak{D}(\mathcal{H})=\{X\in\mathfrak{B}(\mathcal{H})|X=X^\dag\geq0, \, \tr[X]=1\}$. With few exceptions, the ``script'' notation is used to denote operator spaces, e.g. $\mathscr{A}, \mathscr{B}\subseteq\mathfrak{B}(\mathcal{H})$ and the ``calligraphic'' notation to denote super-operators, e.g. $\mathcal{A}:\mathfrak{B}(\mathcal{H})\to\mathfrak{B}(\mathcal{H})$. 

We  denote with $\braket{\cdot}{\cdot}$ the standard inner product over $\Hc$, and with $\inner{\cdot}{\cdot}_{HS}$ the  Hilbert-Schmidt inner product for $\Bf(\Hc)$, that is $\inner{X}{Y}_{HC} = \tr(X^\dag Y)$. In this work we employ also different inner products for $\Bf(\Hc)$: from Friedrichs representation theorem, see e.g. \cite[Theorem 4]{albeverio2003lectures}, every inner product is associated to a self-adjoint positive-definite super operator $\Qc:\Bf(\Hc)\to\Bf(\Hc)$, $\Qc=\Qc^\dag>0$ as $\inner{X}{Y}_\Qc = \inner{X}{\Qc(Y)}_{HS}$. In the rest of this work, we  denote inner products by their positive super operator, i.e. $\inner{\cdot}{\cdot}_{\Qc}$ denotes the inner product defined as $\inner{\cdot}{\Qc(\cdot)}_{HS}.$  Orthogonality with respect to a modified inner product $\inner{\cdot}{\cdot}_\Qc$, is referred $\Qc$-orthogonality.

If $\Vs$ and $\Ws$ are linear operator subspaces, $\Vs +\Ws$ is used for the minimal subspace containing both spaces, while $\Vs \oplus \Ws$ is used to indicate the same while also specifying that the two subspaces are orthogonal with respect to the Hilbert-Schmidt inner product. The symbol $\oplus_\Qc$ is used in case the orthogonality of the considered spaces is with respect to the inner product $\inner{\cdot}{\cdot}_\Qc$. 

As usual, $[\cdot,\cdot]$ denotes the commutator of two operators. Given an operator $X\in\Bf(\Hc)$, we define the support of $X$ as the subspace orthogonal to its kernel, i.e. $\supp(X)\equiv (\ker(X))^\perp$.
Similarly, given a set of operators $\Sc\subseteq\Bf(\Hc)$ we define its support as the sum of the supports of the operators it contains, i.e $\supp(\Sc) \equiv \sum_{X\in\Sc}\supp(X).$ 

In this work, by $*$-algebra we mean a linear operator space closed under composition and the adjoint involution $\dag$. A $*$-algebra $\As$ is said to be unital if it contains the identity, i.e. $I\in\As$. Given a set of operators $\Ss$ we denote by $\Ss'$ its commutant, i.e. $\Ss' \equiv \{X\in\Bf(\Hc)| [S,X]=0,\, \forall S\in\Ss\}$ and given a $*$-algebra $\As$ we denote by $\zentrum(\As)$ its center, i.e. $\zentrum(\As) \equiv \As\cap\As'$. For a $*$-algebra $\As$, the symbol $\dim(\As)$ denotes the dimension of $\As$ intended as a linear operator space, i.e. the number of linearly independent operators that generate $\As$ through linear combination. }
With some abuse of notation we will denote with $\Hf(\As)$ and $\Df(\As)$ the set of Hermitian operators and density operators contained in the algebra $\As$ respectively, i.e. $\Hf(\As) \equiv \Hf(\Hc)\cap\As$ and $\Df(\As) \equiv \Df(\Hc)\cap\As$. 
Further relevant facts about $*$-algebras, their representations  and their properties will be introduced when needed, in particular in Sec. \ref{sec:Conditional_expectations} and \ref{sec:minimal_distorted_algebras}.

For any positivedefinite operator $\sigma$ we define the symmetry-preserving multiplicative operator $\Dc_{\sigma}(X)\equiv\sigma^\um X \sigma^\um.$

\subsection{Quantum discrete-time semigroups with outputs}

In this work we describe quantum systems in an algebraic framework, following e.g. \cite{maassen2010quantum,beny2015algebraic,bouten2007introduction,kr1992introduction}.  The quantum system of interest is associated with a unital $*$-algebra $\Bf(\Hc)$. The self-adjoint operators in $\Hf(\Hc)$ represent the observable quantities, while the available information on the system is condensed in a linear, positive and normalized functional $\Eb_{\rho}[\cdot]= \left<\rho,\cdot\right>_{HS}$, where $\rho\in\Df(\Hc)$ is the {\em state} of the system, providing the expectation values of the observables with respect to a given state. We next introduce the class of dynamics of interest, together with the key assumptions we shall leverage to reduce the model: the interest in reproducing {\em only} a subset of time-varying quantities of interest, and the knowledge that the initial state for the dynamics belongs to a {\em limited subset of initial conditions}.

\subsubsection{Dynamics}
General, physically admissible state transformations are associated to completely-positive (CP), trace-preserving (TP) linear maps  $\mathcal{A}:\Bf(\Hc)\mapsto\Bf(\Hc)$ \cite{nielsen_chuang_2010}.
Any CP map admits an operator-sum representation (also known as Kraus representation \cite{nielsen_chuang_2010}): $\mathcal{A}(\rho)=\sum_k A_k\rho A_k^\dag,$ which is TP if and only if  $\sum_k A_k^\dag A_k=I.$ The last property is also equivalent to the (Hilbert-Schmidt) adjoint map $\Ac^\dag$ being unital: $\mathcal{A}^\dag(I)=I.$ 
CPTP maps are the non-commutative equivalent of stochastic maps between finite probability spaces.
In this work, we assume the dynamic is in discrete-time $(t\in\mathbb{N}),$ time-homogeneous and  Markovian, and hence associated to iterations of a given CPTP  map: \[\rho(t+1)={\cal A}[\rho(t)].\] 
Alternatively, and equivalently in terms of the evolution of the expectation of observables, dynamics can be described as the action of the CP and unital dual map onto the observable of interest, i.e. $C(t+1) = \Ac^\dag[C(t)],$ while the state is time invariant. As standard in the physics literature, we refer to the first scenario as Schr\"{o}dinger picture and to the latter as Heisenberg picture.

\subsubsection{Quantities of interest and linear output maps}

In many practical cases, we are not actually interested in the entire information contained in the system's state $\rho$, but only in studying (or simulating) the evolution of  a few linear function of the system's state. Typical linear functions include:
\begin{itemize}
    \item the {\em probabilities} associated to an observable $C\in\Hf(\Hc)$ with $C=\sum_j c_j \pi_j$, $p_j = \inner{\pi_j}{\rho}_{HS}$;
    \item the {\em expectation value of multiple observable of interest}, say $\{C_j\}$, $\Eb_\rho[C_j] = \inner{C_j}{\rho}_{HS}$;
    \item the {\em reduced state} of a bipartite quantum system, say $\tr_B[\rho]$ whenever $\Hc = \Hc_A\otimes\Hc_B$.
    \item the {\em cross-correlation} of two observables $C_1,C_2\in\Bs$, as $\Eb_{\rho}[C_1C_2].$ Notice that in general $C_1C_2$ needs not be Hermitian.
\end{itemize}

To capture all of this cases in a single mathematical object we define a vector space $\Ys,$ called an \textit{output space}, and an \textit{output map} $\Cc:\Bf(\Hc)\to\Ys,$ which we assume to be linear and refer to the time-dependent function $\Cc(\rho(t))$ as the \textit{output}. 

For example, in the first three scenarios above, we can consider $\Cc[\cdot] = \sum_j e_j \inner{\pi_j}{\cdot}_{HS},$ $\Cc[\cdot] = \sum_j e_j \inner{C_j}{\cdot}_{HS},$ and $\Cc = \tr_B,$ respectively.

Viceversa, any linear output map can be equivalently represented as a family of expectations $\Eb_\rho[C_j]$ for a finite set of (non necessarily hermitian) operators $\{C_j\}\in\Bf(\Hc)$, if one adopts a vector representation of the output space. We can then consider without loss of generality $\Ys\equiv\Rb^m,$ and $\Cc(\cdot) = \sum_j e_j \inner{C_j}{\cdot}_{HS}$ where $\{e_j\}$ is an orthogonal base for $\Ys$ and with $\{C_j\}$ a set of operators. It is thus equivalent to assume to be provided with an output map $\Cc$ or a finite set of operators of interest $\{C_j\}\subset \mathfrak{B}({\Hc})$

\subsubsection{Initial states}

As it is often the case in experiments and quantum algorithms, the initial conditions of interest are often restricted to a (small) subset of density operators, sometimes limited to a single pure or thermal state. For this reason, we assume to be given a {\em finite set $\Sf\subseteq\Df(\Hc)$} that contains all the initial conditions of interest. By linearity of the evolution, we can relax this assumption to any intersection of a (finite-dimensional) operator affine space and the set of density operators.

\subsubsection{Equivalent dynamical models}

Under the assumptions introduced so far, the  quantities of interest in our model are represented by the {\em output}, whose components are the time-dependent expectations \[{\cal C}_j(\rho_k)=\Eb_{\rho_k}[\Ac^{\dag t}(C_j)]\] for all $t\in\Nb$, and for each observable of interest $C_j$ and each initial state $\rho_k\in\Sf$. In this work we aim to find another Hilbert space $\check{\Hc}$, another set of observables $\check{C}_j\in\Hf(\check{\Hc})$, another set of states $\check{\rho}_k\in\Df(\check{\Hc})$, and another CPTP evolution map $\check{\Ac}:\Df(\check{\Hc})\to\Df(\check{\Hc})$ such that
\[\Eb_{\rho_k}[\Ac^{\dag t}(C_j)] = \Eb_{\check{\rho}_k}[\check{\Ac}^{\dag t}(\check{C}_j)]\quad \forall j,\, \forall k,\, \forall t\geq0. \]
Whenever this condition is satisfied we say that these two models are \textit{equivalent} with respect to the output. 

The same property can be expressed in Schr\"{o}dinger picture: We say that the two models are equivalent whenever
\(\inner{C_j}{\Ac^t[\rho_k]}_{HS} = \big<\check{C_j},\check{\Ac}^t[\check{\rho}_k]\big>_{HS}\), $\forall j, k$ and $ \forall t\geq0,$ or more compactly:
\[\Cc \Ac^t[\rho_k] = \check{\Cc} \check{\Ac}^t[\check{\rho}_k]\quad, \forall k, \forall t\geq0\]
where $\check{\Cc}$ is the linear output map induced by the operators $\check{C}_j$.
Albeit the Heisenberg picture is more natural in an algebraic setting, in the following we mainly use the Schr\"{o}dinger picture. The reason is twofold: in the applications of interest for our reduction methods (open quantum systems, quantum control and quantum information processing), considering state evolution is standard, and the derivation of the reduction methods leveraging on the knowledge of the initial conditions, which are arguably more involved, becomes more direct in this picture. All of the results we derive could, however, be equivalently derived in Heisenberg picture.

\begin{remark} Note that, albeit we defined the equivalent model with operators in $\Bf(\check{\Hc}),$ it is possible, in principle, that the the whole algebra $\Bf(\check{\Hc})$ is not needed to describe the information captured by the model. Specifically, it is possible that the evolved operators of interest $\check{\Ac}^{\dag t}[\check{C}_j]$ always belong to a unital {\em $*$-subalgebra} $\check{\As}\subseteq\Bf(\check{\Hc})$.  
In this case, we can show that there is always a $\check{\rho}\in\check{\As}$ which yields and equivalent model. 

\begin{proposition}
    \label{prop:state_inside_algebra}
    Consider a unital $*$-subalgebra $\Bs\subset\Bf(\Hc)$ and a state $\rho\in\Df(\Hc)$. Then there exist a density operator $\rho_\Bs\in\Df(\Bs)$ such that, for all $B\in\Bs$,
    \[\inner{\rho}{B}_{HS} = \inner{\rho_\Bs}{B}_{HS}.\]
\end{proposition}
\begin{proof}
    Let $\Pi:\Bf(\Hc)\to\Bf(\Hc)$, be the orthogonal projector onto $\Bs$ with respect to $\inner{\cdot}{\cdot}_{HS}$, i.e.  ${\rm Im}(\Pi)=\Bs$ and such that $\Pi=\Pi^\dag=\Pi^2$.
    Since $\Bs$ is unital then $\Pi$ is unital and, by Tomiyama Theorem \cite{tomiyama1957projection}, is also CP. As consequence of the fact that $\Pi$ is self-adjoint $\Pi$ is also CPTP. Then \(\inner{\rho}{B}_{HS} = \inner{\rho}{\Pi(B)}_{HS} = \inner{\Pi(\rho)}{B}_{HS}\) where $\Pi(\rho)\in\Bs$ is a density operator since $\Pi$ is CPTP. 
\end{proof}

For this reason, we can then assume that $\check{\Ac}:\check{\As}\to\check{\As}$ thus restricting the entire new model onto the subalgebra $\check{\As}$. 
Ideally, $\check{\As}$ would be the smallest algebra that supports the probability space necessary to describe the expectations of interest. 
\end{remark}

Putting together all the elements that we introduced so far, we can formally define the class dynamical models of interest for this work.
\begin{definition}[QSO model] 
\label{def:QSO_model}
A {\bf quantum discrete-time semigroup with output (QSO)} is a model 
of the form:
 \begin{equation}
    \begin{cases}
    \rho(t+1) = \mathcal{A}[\rho(t)]\\
    Y(t) = \mathcal{C}[\rho(t)]
    \end{cases}\quad \rho(0)\in\mathfrak{S}.
    \label{eqn:single_time_model}
\end{equation}
where  $\mathscr{B}\subseteq\Bf(\Hc)$ is a unital $*$-algebra,  $\mathcal{A}[\cdot]:\mathscr{B}\mapsto\mathscr{B}$ a CPTP map,  $\mathcal{C}[\cdot]:\mathscr{B}\mapsto\mathscr{Y}$ a linear output map,  $\mathfrak{S}\subseteq\mathfrak{D}(\mathscr{B})$ and a set of initial conditions.
The model is thus specified by the 5-tuple $(\mathscr{B},\mathscr{Y},\mathcal{A},\mathcal{C}, \mathfrak{S})$.
\end{definition}
Crucially, considering models defined on a supporting (proper) subalgebra, rather than a full operator space $\Bf(\Hc),$ allows us to seek further reductions on an already partially reduced model.
For the sake of simplicity, when clear from the context, we specify a QSO model using only the triplet $(\mathcal{A},\mathcal{C}, \mathfrak{S}),$ leaving the state space and the output space intended. 
When all the initial conditions or all states are of interest, one can choose $\Sf=\Df(\Hc)$ or $\Cc=\Ic.$ respectively.

\subsection{Model reduction task and motivation}
In this work, we propose a model reduction scheme that leverages two assumptions that are often verified in the scenarios of interest: the fact that we are only interested in reproducing the output of a given dynamics, and the fact that we restrict our attention to a restricted set of initial conditions.
The main problem we address in this work is the following.
\begin{problem}[QSO model reduction]
\label{prob:single_time}
Given a QSO $(\mathscr{B},\mathscr{Y},\mathcal{A},\mathcal{C}, \mathfrak{S})$ find an equivalent QSO $(\check{\mathscr{A}},\mathscr{Y}, \check{\mathcal{A}},\check{\mathcal{C}}, \check{\mathfrak{S}})$, with $\dim(\check{\mathscr{A}})\leq\dim(\mathscr{B})$, and a positive and trace preserving map $\Phi:\mathscr{B}\to\check{\mathscr{A}}$ such that, for every initial condition $\rho_0\in\mathfrak{S}$ and every time $t\geq0$, the two models provide the same output i.e. 
\[\mathcal{C}\mathcal{A}^t[\rho_0] = \check{\mathcal{C}}\check{\mathcal{A}}^t\Phi[\rho_0].\] 
\end{problem}
Note that, in most practical cases, $\Bs$ can be taken to be $\Bs=\Bf(\Hc)$.  Moreover, the procedure we propose always returns a map $\Phi$ that is actually CPTP. 

If one is only interested in reproducing the outputs of a given QSO model and is {\em not} interested in retaining the CP character of the evolution, the model reduction problem is significantly simpler and an optimal solution can be devised, as we shall see in the next sections. Such solution, however, has in general no physical interpretability and is at risk of providing nonphysical (nonpositive) predictions in the presence of small errors. 

The constraint that the reduced evolution map is CPTP represents in fact both the main novelty, as well as  the key challenge, in this work.
A particularly  case of interest that further motivates the need for a CP reduction arises when one aims to implement an efficient simulations on a quantum computer: having a CP dynamics would allow for a directly implementable system.

\subsection{Connections with hidden Markov models}
Allowing for a general linear output function, the class QSO model we introduced naturally covers a number of settings of interest for applications, with algorithms based on quantum walks and models of open system being the most natural ones. In addition, it has a natural connection to, and it has been motivated by, classical hidden Markov models \cite{vidyasagar2011complete,ito1992identifiability}. In that setting, the full systems evolves like a (large) Markov chain $X(t)$, but the available observations consist only on a function $Y(t) = f[X(t)]$ of this unaccessible process, which needs not be a Markov process itself. If the quantity of interest for a hidden Markov model is the marginal distribution of $Y$ at different times \cite{tac2023}, it can be seen as a particular case of a QSO models where the underlying algebra is commutative.
For these classical models, algebraic methods for stochasticity-preserving reductions have been developed in \cite{tac2023}, and their relation with the presented work is discussed in Appendix \ref{sec:hmm}.
However, for hidden Markov model one is typically concerned with reproducing the output full statistics, and not only the one-time marginals. In order to cover this situation in the quantum case, it is necessary to endow our QSO model with the conditioning effects emerging from quantum (generalized) measurements, leading to stochastic quantum trajectories or filtering equations \cite{bouten2007introduction,benoist2023limit, wiseman2009quantum, barchielli2009quantum, carmichael2009open, belavkin1992quantum}. Discrete-time stochastic processes of this kind, where the relevant output is a classical stochastic process emerging from a quantum evolution conditioned on measurements, have in fact been named {\em quantum hidden Markov models} \cite{monras2010hidden,monras2016quantum,cholewa2017quantum}  in analogy with their classical counterpart. The foundational methods introduced here for CPTP reduction can be extended to these cases, essentially modifying the notion of reachable and observable subspaces we shall introduce in Sec. \ref{sec:linear_model_reduction} to include the effect of conditioning. This has been done explicitly, building on the results of this work, in \cite{letters,grigoletto2025quantummodelreductioncontinuoustime}, for models subjected to discrete and continuous measurements, respectively.
Interestingly, while in the classical case a model able to reproduce the full statistics of a process is {\em always} able to also reproduce the unconditional single-time distributions, this is not always the case in the quantum setting, due to the non-commutative nature of the probabilistic structure, as explicitly shown in \cite{letters}.

\section{Reachability, observability and\\ optimal linear reductions}
\label{sec:linear_model_reduction}

In this section, we  collect a series of definitions and results adapted from linear system theory \cite{wonham,kalman,rosenbrock1970state} that provides the basis for solving Problem \ref{prob:single_time} and, at the same time provide the solution to a related and simpler problem. 

\subsection{Reachable subspace}
Given an initial condition $\rho(0)=\rho_0$ a trajectory of the model $(\mathcal{A,C},\mathfrak{S})$ from $\rho_0\in\Sf$ is defined as the set of states $\rho(t)$ that the system assumes at non-negative times $\mathfrak{T}_{\rho_0}\equiv\{\rho(t)=\mathcal{A}^t[\rho_0], t\geq0\}$. Consequently, the set of trajectories starting from points contained in $\mathfrak{S}$ is defined as the union of the single trajectories $\mathfrak{T}_\mathfrak{S} = \bigcup_{\rho_0\in\mathfrak{S}}\mathfrak{T}_{\rho_0}$.

\begin{definition}[Reachable subspace from $\Sf$]
    We call reachable subspace (from $\Sf$) of the model $(\mathcal{A}, \mathcal{C}, \mathfrak{S})$ the linear space generated by the set of trajectories $\mathfrak{T}_\mathfrak{S}$:
    \begin{equation}
        \Rs : = \Span\{\mathcal{A}^t[\rho_0],\,  t\geq0, \, \rho_0\in\mathfrak{S}\}.
        \label{eq:reachable_space}
    \end{equation}
\end{definition}
It can be proven, in analogy of what is done in for classical linear systems with inputs \cite{cdc2022,kalman}, that $\Rs$ is the smallest $\mathcal{A}$-invariant subspace of $\mathscr{A}$ that contains $\Span\{\mathfrak{S}\}$. Using Caley-Hamilton, one can also prove that $\Rs = \Span\{\Ac^t[\rho_0],\, t=0,\dots,n^2-1, \,\rho_0\in\Sf\}$ where $n=\dim(\Hc)$.

\subsection{Non-observable subspace}
\label{sec:non-observable_subspace}
We next characterize the set of operators that produce no output at all times.
\begin{definition}[Non-observable subspace]
    The non-observable subspace of the model $(\mathcal{A}, \mathcal{C}, \mathfrak{S})$ is the subspace
    \begin{equation}
        \mathscr{N}\equiv \{X \in\mathscr{B} | \mathcal{C}\mathcal{A}^t[X]=0, \, \forall t\geq0\}.
        \label{eq:non_observable_subspace}
    \end{equation}
\end{definition}
It is possible to prove that $\mathscr{N}$ is the largest $\mathcal{A}$-invariant subspace contained in $\ker{\mathcal{C}}$ \cite{kalman}.

The non-observable subspace can be used to characterize the set of states that are not distinguishable from the output at any time. If we consider two initial states $\rho_1,\rho_2\in\mathfrak{D}(\Hc)$ and  $\rho_1-\rho_2\in\mathscr{N}$, then their output is equivalent at all times, namely $\mathcal{C}[\mathcal{A}^t[\rho_1]]=\mathcal{C}[\mathcal{A}^t[\rho_2]]$ for all $t\geq0$. 
 
As described before, because we assumed $\Cc$ to be linear, we can assume to have access to the set of operators $\{C_j\}$ that define $\Cc$.
We can then compute, see e.g. \cite{cdc2022}, the space orthogonal to $\Ns$, defined as $\Ns^\perp = \{X\in\Bs|\inner{X}{Y}_{HS}=0, \, \forall Y\in\Ns\},$ in the following way:
\[\mathscr{N}^\perp = \Span\{\mathcal{A}^{\dag t}[C_j], \, t=0,\dots, n^2-1\}. \]

\subsection{Effective subspace and minimal linear realizations}

\begin{definition}[Effective subspace]
    Given the reachable and non-observable subspace associated to model $(\mathcal{A,C}, \mathfrak{S})$, i.e. $\mathscr{R}$ and $\mathscr{N}$ we define an {\em effective subspace} $\mathscr{E}$ as a (non-necessarily orthogonal) complement of the intersection $\mathscr{R}\cap\mathscr{N}$ to $\mathscr{R}$, i.e. $\mathscr E$ is any subspace such that $\mathscr{R} = \mathscr{E}+(\mathscr{R}\cap\mathscr{N})$.
\end{definition}

\begin{remark}
    It should be noted that the choice of $\Es$ is not unique, in fact, any representative of the quotient space $\Rs/(\Rs\cap\Ns)$ is a valid choice for $\Es$. Usually, when solving a model reduction problem in control system theory, every choice of a representative of $\Rs/(\Rs\cap\Ns)$ is equivalent. 
    When solving Problem \ref{prob:single_time} however, we aim to find a CPTP reduced dynamics. The additional positivity constraints in this scenario make the choice of a particular effective subspace relevant in the effort of finding the minimal model.
\end{remark}

Intuitively, the effective subspace contains all the states that can be reached by the model minus the ones that can not be distinguished from the output. This is confirmed by the fact that the projection onto the effective subspace solves a simpler version of Problem \ref{prob:single_time}. 

\begin{problem}[Linear model reduction]
\label{prob:linear_mr}
Given a QSO $(\mathscr{B},\mathscr{Y},\mathcal{A},\mathcal{C}, \mathfrak{S})$ find a linear model $(\mathscr{V},\mathscr{Y}, \Ac_L,\Cc_L, \Sf_L)$, of equations
\[\begin{cases}
    X(t+1) = \Ac_L[X(t)] & X\in\mathscr{V} \\
    Y(t) = \Cc_L[X(t)]& Y\in\mathscr{Y}
\end{cases} \quad X(0) \in{\mathfrak{S}_L}\]
possibly smaller, i.e. $\dim(\Vs)\leq\dim(\mathscr{B})$, and a linear map $\Phi:\mathscr{B}\to\mathscr{V}$, $\Phi[\rho_0] = X(0)$ such that, for every initial condition $\rho_0\in\mathfrak{S}$ and every time $t\geq0$, the two models provide the same output i.e. 
\[\mathcal{C}\mathcal{A}^t[\rho_0] = \Cc_L{\Ac_L}^t\Phi[\rho_0].\] 
\end{problem}

The difference between Problems \ref{prob:single_time} and \ref{prob:linear_mr} is subtle but crucial. Problem \ref{prob:single_time} aims to find a reduced QSO model, namely the map $\check{\Ac}$ has to be CPTP and the state space $\check{\As}$ needs to be a $*$-algebra. In Problem \ref{prob:linear_mr}, this assumption is relaxed, $\Ac_L$ only needs to be linear, and $\Vs$ is simply an operator space, thus simplifying the problem.

An equivalent version of this problem (with zero initial condition but with linear inputs) has been solved originally by Rosenbrock \cite{rosenbrock1970state}: it has been proven recently  that the same approach can be used to solve Problem \ref{prob:linear_mr} \cite{cdc2022,tac2023}. We report here the result for completeness.

\begin{proposition}
Let $\mathcal{E}$ be an effective subspace of the QSO model $(\mathscr{B},\mathcal{Y,A,C,S})$ and $\Pi_\mathcal{E}:\mathscr{A}\to\mathcal{E}$ be the orthogonal (with respect to $\inner{\cdot}{\cdot}_{HS}$) projection onto $\mathcal{E}$. Then $\mathcal{E}$ is a subspace of minimal dimension such that, defining $\Ac_L = \Pi_\mathcal{E}\mathcal{A}\Pi_\mathcal{E}$, $\Cc_L = \mathcal{C}\Pi_\mathcal{E}$, \[\mathcal{C}\mathcal{A}^t[\rho_0] = {\Cc_L}{\Ac_L}^t\Pi_\mathcal{E}[\rho_0].\]
\end{proposition}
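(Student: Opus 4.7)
The plan is to split the argument into a correctness part (the reduced model reproduces the output of the original one) and a minimality part (no linear reduction of strictly smaller dimension exists), leveraging the $\mathcal{A}$-invariance of both $\mathscr{R}$ and $\mathscr{N}$ established in Section~\ref{sec:linear_model_reduction}, together with the containment $\mathscr{N}\subseteq\ker\mathcal{C}$.

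For correctness I would start from the observation that $\mathfrak{S}\subseteq\mathscr{R}$ by construction of $\mathscr{R}$, so every $\rho_0\in\mathfrak{S}$ decomposes uniquely as $\rho_0 = e_0 + n_0$ with $e_0\in\mathcal{E}$ and $n_0\in\mathscr{R}\cap\mathscr{N}$. Since $\mathscr{R}\cap\mathscr{N}$ is $\mathcal{A}$-invariant (intersection of two invariant subspaces) and lies in $\ker\mathcal{C}$, one gets $\mathcal{C}[\mathcal{A}^t[n_0]]=0$ for every $t\geq 0$ and therefore $\mathcal{C}[\mathcal{A}^t[\rho_0]]=\mathcal{C}[\mathcal{A}^t[e_0]]$. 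The next step is an induction on $t$ showing $\mathcal{A}^t[e_0] = (\Pi_\mathcal{E}\mathcal{A}\Pi_\mathcal{E})^t[\Pi_\mathcal{E}\rho_0] + n_t$ for some $n_t\in\mathscr{R}\cap\mathscr{N}$: at each step one writes $\mathcal{A}[e] = \Pi_\mathcal{E}\mathcal{A}[e] + (\id-\Pi_\mathcal{E})\mathcal{A}[e]$, and the $\mathcal{A}$-invariance of $\mathscr{R}$ together with the choice of $\mathcal{E}$ as the orthogonal complement of $\mathscr{R}\cap\mathscr{N}$ inside $\mathscr{R}$ implies that the residual term lies in $\mathscr{R}\cap\mathscr{N}$. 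Applying $\mathcal{C}$ annihilates all $n_t$ contributions and delivers the claimed identity.

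For minimality I would follow the Kalman--Rosenbrock realization argument cited in the statement. Any linear reduction $(\mathscr{V},\mathcal{A}_L',\mathcal{C}_L',\Phi)$ reproducing the outputs yields the factorization $\mathcal{C}\mathcal{A}^t \rho_0 = \mathcal{C}_L'\mathcal{A}_L'^{\,t}\Phi[\rho_0]$ for every $\rho_0\in\mathfrak{S}$ and $t\geq 0$. Combining the Cayley--Hamilton characterization $\mathscr{R}=\Span\{\mathcal{A}^t[\rho_0]: t\leq n^2, \rho_0\in\mathfrak{S}\}$ with the definition of $\mathscr{N}$ as the largest $\mathcal{A}$-invariant subspace inside $\ker\mathcal{C}$, one sees that the output-trajectory map $\rho_0\mapsto(\mathcal{C}\mathcal{A}^t\rho_0)_{t\geq 0}$ restricted to $\mathscr{R}$ has kernel exactly $\mathscr{R}\cap\mathscr{N}$. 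Therefore $\Phi$ must separate the cosets of $\mathscr{R}/(\mathscr{R}\cap\mathscr{N})$, forcing $\dim(\mathscr{V})\geq \dim(\mathscr{R}/(\mathscr{R}\cap\mathscr{N})) = \dim(\mathcal{E})$.

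The main obstacle is the subtlety in the correctness step concerning the choice of $\mathcal{E}$. For a \emph{generic} complement of $\mathscr{R}\cap\mathscr{N}$ in $\mathscr{R}$, the orthogonal projection $\Pi_\mathcal{E}\rho_0$ need not coincide with the $\mathcal{E}$-component $e_0$ of the direct-sum decomposition, and a small example shows that the identity $\mathcal{C}[\mathcal{A}^t[\rho_0]]=\mathcal{C}_L[\mathcal{A}_L^{\,t}[\Pi_\mathcal{E}\rho_0]]$ can fail. To use the orthogonal projection one has to specialize to representatives $\mathcal{E}$ that are Hilbert--Schmidt orthogonal to $\mathscr{R}\cap\mathscr{N}$ inside $\mathscr{R}$; alternatively one replaces $\Pi_\mathcal{E}$ by the oblique projector along $\mathscr{R}\cap\mathscr{N}$. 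Making this choice explicit, and verifying that the orbit $(\Pi_\mathcal{E}\mathcal{A}\Pi_\mathcal{E})^t[\Pi_\mathcal{E}\rho_0]$ stays in $\mathcal{E}$ and differs from $\mathcal{A}^t[e_0]$ only by an element of $\mathscr{R}\cap\mathscr{N}$, is the technically delicate point on which the whole correctness argument rests.
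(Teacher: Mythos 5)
The paper does not prove this proposition in the text---it defers entirely to the reference \cite{cdc2022}---so there is no in-paper argument to compare against; judged on its own, your proposal follows the standard Rosenbrock/Kalman route that the citation is pointing to, and the two halves (invariance-based correctness plus a Hankel-rank lower bound for minimality) are the right ingredients. Your flagged obstacle is a genuine and correct observation about the statement as written: since the Remark preceding the definition allows $\mathcal{E}$ to be \emph{any} complement of $\mathscr{R}\cap\mathscr{N}$ in $\mathscr{R}$, the Hilbert--Schmidt-orthogonal projector $\Pi_\mathcal{E}$ need not annihilate $\mathscr{R}\cap\mathscr{N}$, and already at $t=0$ one gets $\mathcal{C}[\Pi_\mathcal{E}\rho_0]=\mathcal{C}[e_0]+\mathcal{C}[\Pi_\mathcal{E}n_0]$ with no reason for the second term to vanish; the statement is only literally correct for the choice $\mathcal{E}=\mathscr{R}\cap(\mathscr{R}\cap\mathscr{N})^{\perp}$ (or with $\Pi_\mathcal{E}$ read as the oblique projector along $\mathscr{R}\cap\mathscr{N}$), exactly as you say. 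One small imprecision remains in your minimality step: it is not $\Phi$ alone that must separate the cosets of $\mathscr{R}/(\mathscr{R}\cap\mathscr{N})$, since $\Phi$ is only ever applied to elements of $\mathrm{span}\,\mathfrak{S}$, while the rest of $\mathscr{R}$ is represented in the reduced model by $\mathcal{A}_L'^{\,s}\Phi[\rho_0]$ rather than by $\Phi[\mathcal{A}^s\rho_0]$. The clean way to finish is the Hankel factorization $\mathcal{C}\mathcal{A}^{t+s}[\rho_0]=(\mathcal{C}_L'\mathcal{A}_L'^{\,t})(\mathcal{A}_L'^{\,s}\Phi[\rho_0])$, which exhibits the Hankel map as a composition through $\mathscr{V}$ and hence bounds $\dim\mathscr{V}$ below by its rank $\dim\mathscr{R}-\dim(\mathscr{R}\cap\mathscr{N})=\dim\mathcal{E}$; with that adjustment your argument is complete.
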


The proof of this Proposition can be found in \cite{cdc2022}. While this result provides the best linear exact reductions for a system, these do not preserve two key aspects of a QSO model: (i) the state space should be associated with an operator algebra; (ii) the dynamics should be CPTP. The rest of the paper is devoted to constructing reductions that are indeed QSO. Necessarily, these models  have a dimension greater or equal to the ones obtained via reduction to an effective subspace.

\section{Conditional expectations, state extensions and their CPTP factorizations}
\label{sec:Conditional_expectations}

In this and the next section we assume, for the sake of simplicity, that $\Bs=\frak{B}(\Hc)\simeq\mathbb{C}^{n\times n}$. All the material we present can be appropriately adapted to the case where $\Bs\subsetneq\Bf(\Hc)$.

\subsection{Conditional Expectations and State Extensions: A Brief Review}
Let $\As$ be a $*$-subalgebra of $\Bs$, i.e. $\As\subseteq\Bs$. It is known that, via some unitary $U,$ any $*$-subalgebra, as well as its commutant and center, can be put in an orthogonal block decomposition (also called Wedderburn decomposition) \cite{wedderburn1908hypercomplex}, associated to a decomposition of the underlying vector space as:
\begin{equation}\label{eq:hildec}
    \Hc 
    =\bigoplus_\ell \Hc_{S,\ell}\otimes\Hc_{F,\ell}\oplus\Hc_{R},
\end{equation}
so that:

\begin{align}
    \As&=U\left(\bigoplus_\ell \Bf(\Hc_{S,\ell})\otimes I_{F,\ell}\oplus \zero_{R}\right)U^\dag,
    \label{eq:algblock}\\
    \As'&= U\left(\bigoplus_\ell I_{S,\ell}\otimes \Bf(\Hc_{F,\ell})\oplus I_{R}\right)U^\dag,
    \label{eq:commblock}\\
    \zentrum(\As)&= U\left(\bigoplus_\ell \Cb_\ell I_{S,\ell}\otimes I_{F,\ell}\oplus \zero_{R}\right)U^\dag,
    \label{eq:zenblock}
\end{align}
In the following we focus on unital $*$-algebras, corresponding to $\Hc_R=0.$

A {\em conditional expectation} $\CE_{\As}:\Bs\to\Bs$, is a CP, unital map with ${\rm Im}(\CE_\As) = \As$ that satisfies \cite{petz2007quantum}:
\begin{equation}
    \CE_{\As}(AB)=A\CE_{\As}(B)
\end{equation}
for every $A\in\As$, $B\in\Bs$.
By choosing $B=I$ we have that $\CE_{\As}$ acts identically on $\As,$ so ${\rm Fix}(\CE_\As) = {\rm Im}(\CE_\As)=\As,$ and $\CE_\As^2=\CE_\As$, thus $\CE_\As$ is a projector onto $\As$.  However, it is not in general self-adjoint, and thus not an orthogonal projection, with respect to the standard Hilbert-Schmidt product. 

A {\em state extension}, $\SE_\As:\Bs\to\Bs$, is the dual of a conditional expectation with respect to the Hilbert-Schmidt inner product \cite[Chapter 9]{petz2007quantum}, namely the unique CPTP map $\SE_{\As}\equiv\CE_\As^\dag$, satisfying:
\[ \inner{\SE_{\As}(\rho)}{X}_{HS}=\inner{\rho}{\CE_{\As}(X)}_{HS},\quad \forall \rho, X\in\Bf(\Hc).\]
Since $\CE_\As$ is idempotent, then also $\SE_{\As}^2=\SE_{\As}.$ Moreover, if  for some $\bar\rho\in\Df(\Hc),$ we have $\SE_{\As}(\bar \rho)=\bar\rho$ then $\CE_{\As}$ is said to preserve $\bar\rho.$ In the following we say that $\SE_\As$ is a state extension \textit{associated to $\As$} if $\SE_{\As}^\dag=\CE_\As$ is a projector onto $\As$.

As proved in \cite[Prop. 1.5]{wolf2012quantum}, any conditional expectation on a unital algebra $\As$ with decomposition \eqref{eq:algblock} can be written in the form:
\begin{equation}
  \CE_{\As}(X)=U\left(\bigoplus_\ell \tr_{\Hc_{F,\ell}}[(I_{S,\ell}\otimes\tau_{F,\ell})(V_\ell X V^\dag_\ell)]\otimes I_{F,\ell}\right)U^\dag, 
  \label{eqn:cond_exp_blocks}
\end{equation}
with $V_\ell$ a linear operator from $\Hc$ onto $\Hc_{S,\ell}\otimes\Hc_{F,\ell}$ such that $V_\ell V_\ell^\dag=I_{S,\ell}\otimes I_{F,\ell},$ $V_\ell^\dag V_\ell = \Pi_{SF,\ell}\in\Bf(\Hc)$ the orthogonal projector onto $\Hc_{S,\ell}\otimes\Hc_{F,\ell}$, and $\tau_{F,\ell}$ are full-rank density operators on $\Hc_{F,\ell}.$  This shows that the conditional expectations on a given $*$-algebra are completely parametrized by the {\em factor states} $\tau_{F,\ell}.$ The state extension takes a similar form. 
\begin{proposition}
\label{prop:state_ext_blocks}
The dual of a $\CE_{\As}$ with decomposition \eqref{eqn:cond_exp_blocks} takes the form
\begin{equation}
  \SE_{\As}(X) = U\left(\bigoplus_\ell \tr_{\Hc_{F,\ell}}(V_\ell XV^\dag_\ell)\otimes \tau_{F,\ell}\right)U^\dag. 
  \label{eqn:state_ext_blocks}
\end{equation}
\end{proposition}
\begin{proof}
Let us start by recalling, from \cite[Lemma 1]{ticozzi2017alternating}, that for $\Hc = \bigoplus_\ell \Hc_\ell$ and $W = U(\bigoplus_\ell W_\ell)U^\dag$, we have $\tr(WY) = \sum_\ell \tr(W_\ell V_\ell Y V_\ell^\dag)$, where $V_\ell$ are the non-square isometries defined above. Through direct computation and using the fact that the dual of a partial trace is the tensor product with the identity on the traced-out subsystem, we then obtain
    \begin{align*}
        &\inner{X}{\CE_\As[Y]}_{HS}= \\
        &=  \inner{X}{U\left(\bigoplus_\ell \tr_{\Hc_{F,\ell}}[(I_{S,\ell}\otimes \tau_{F,\ell})(V_\ell Y V_\ell^\dag)]\otimes I_{F,\ell}\right)U^\dag}\\ 
        &= \sum_\ell \inner{V_\ell X V_\ell^\dag}{\tr_{\Hc_{F,\ell}}[(I_{S,\ell}\otimes\tau_{F,\ell})(V_\ell Y V_\ell^\dag)]\otimes I_{F,\ell}}\\
        &= \sum_\ell \inner{\tr_{\Hc_{F,\ell}}[V_\ell X V_\ell^\dag]}{\tr_{\Hc_{F,\ell}}[(I_{S,\ell}\otimes\tau_{F,\ell})(V_\ell Y V_\ell)]}\\
        &= \sum_\ell \inner{\tr_{F,\ell}[V_\ell X V_\ell^\dag]\otimes I_{F,k}}{(I_{S,\ell}\otimes\tau_{F,\ell})(V_\ell Y V_\ell^\dag)}\\
        &= \sum_\ell \inner{V_\ell^\dag\left[\tr_{\Hc_{F,\ell}}[V_\ell X V_\ell^\dag]\otimes \tau_{F,\ell}\right]V_\ell}{Y}\\
        &= \inner{\underbrace{U\left(\bigoplus_\ell \tr_{\Hc_{F,\ell}}[V_\ell X V_\ell^\dag]\otimes\tau_{F,\ell}\right)U^\dag}_{=\SE_\As[X]}}{Y}\\
        &= \inner{\SE_\As[X]}{Y}_{HS}
    \end{align*}
    Which concludes the proof.
\end{proof}

We thus have that ${\rm Im}(\SE_\As)$ contains a full-rank state of the form $\sigma=U\left(\bigoplus_\ell \frac{I_{S,\ell}}{\dim(\Hc_{S,\ell})}\otimes \tau_{F,\ell}\right)U^\dag$, 
with a block structure that is matching the one of $\As.$
By comparing the block representations \eqref{eqn:cond_exp_blocks} and \eqref{eqn:state_ext_blocks} with the one of $\sigma$, we have that
$ \SE_\As= \Dc_{\sigma}\CE_\As \Dc_{\sigma}^{-1},$ and in particular, for any $A\in\As,$ we have $ \SE_\As(A)= \Dc_{\sigma}(A).$
It follows that 
\begin{align*}
{\rm Im} (\SE_\As)&=\fix (\SE_\As)=\Dc_{\sigma}({\rm Im} (\CE_\As))=\Dc_{\sigma}(\As)\\
&= U \left(\bigoplus_\ell \Bf(\Hc_{S,\ell})  \otimes\tau_{F\ell}\right) U^\dag.
\end{align*}

\begin{remark}
    Note that $\forall \rho\in\Df(\As)$, $\SE_\As(\rho) = \Dc_\sigma(\rho)\in\Df(\Hc),$ and is  a thus a preserved state for $\CE_\mathscr{A},$ which motivates the name of state extension: it is a map that extends a state in the subalgebra to a general one that is preserved by its dual conditional expectation.
\end{remark}

Using the form of the full rank state $\sigma$ given above, it is easy to see that indeed $\CE_\As$ is self-adjoint with respect to any modified inner product of the form
$\inner{X}{Y}_{\Dc_\sigma}=\tr(X^\dag \sigma^\um Y\sigma^{\um})$ with $\Dc_{\sigma}(\cdot) \equiv \sigma^{\um}\cdot\sigma^\um$, see e.g. \cite[Proposition 2]{ticozzi2017alternating}. It then follows that $\CE_\As$ is an orthogonal projection, with respect to the Hilbert-Schmidt inner product, if and only if all $\tau_{F,\ell}$ are completely mixed states on their supports. This is equivalent to $\SE_\As$ being unital.

Similarly, we have that $\SE_\As$ is self-adjoint with respect to the modified inner products
$\inner{X}{Y}_{\Dc_\sigma^{-1}}=\tr(X^\dag \sigma^{-\um} Y\sigma^{-\um})$ with $\Dc_\sigma^{-1}(\cdot) \equiv\sigma^{-\um}\cdot\sigma^{-\um}$.

\subsection{ Unital factorizations of conditional expectations and their duals} 
Leveraging the Wedderburn decomposition of $\As$ we can reduce the size of the representation of a $*$-subalgebra by avoiding the repeated blocks. In particular, given an algebra $\As\subseteq\Bs$ with Wedderburn decomposition as in  \eqref{eq:algblock} we can observe that it is isomorphic to $\check{\As} = \bigoplus_\ell \Bf(\Hc_{S,\ell})$, $\check{\As}\subseteq \Cb^{m\times m}$ with $m=\sum_\ell \dim(\Hc_{S,\ell})$. 
We next show that this reduction in representation is possible using CP unital or CPTP maps that factorize the conditional expectation or its dual, respectively. This result, albeit simple, is key to our aim, as it allows us to construct reductions of CP dynamics that remain CP.

\begin{theorem}\label{thm:factorizations}
Let $\As\subseteq\Bs$ be an unital $*$-subalgebra with decomposition as in \eqref{eq:algblock}. Define $\check \As=\bigoplus_\ell \Bs(\Hc_{S,\ell}).$ Then for any conditional expectation $\CE_\As$ and state extension $\SE_\As$ there exist (non-square) factorizations:
\begin{equation}
\CE_\As = \eo \ro, \quad \SE_\As = \es \rs
\end{equation}
where 
\begin{align*}
    \eo&:\check\As\rightarrow \Bs,&\ro&:\Bs\rightarrow \check\As, \\
    \es&:\check\As\rightarrow \Bs,&\rs&:\Bs\rightarrow \check\As,
\end{align*}
with $\eo, \ro$ CP and unital maps while $\es, \rs$ are CPTP, and so that for any $X\in\Hf(\Hc)$, $\rho\in\Df(\Hc)$ it holds:
\begin{equation}
    \tr(\CE_\As(X)\rho)=\tr(\ro(X)\rs(\rho))=\tr(X\SE_\As(\rho)).
\end{equation}
\end{theorem}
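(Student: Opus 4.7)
The plan is to exhibit the factorizations by stripping off the trivial tensor factors in the Wedderburn decomposition of $\As$, and then to verify the CP/unital/TP properties and the trace identities essentially by inspection. The key observation is that the block-diagonal formulas \eqref{eqn:cond_exp_blocks} and \eqref{eqn:state_ext_blocks} already present $\CE_\As$ and $\SE_\As$ as a ``compression'' step (going down to $\check\As$) followed by an ``extension'' step (going back up into $\As$), with the roles of $I_{F,\ell}$ and $\tau_{F,\ell}$ swapped between the two factorizations.

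Concretely, I would set
\[ \ro(X) := \bigoplus_\ell \tr_{\Hc_{F,\ell}}\!\bigl[(I_{S,\ell}\otimes\tau_{F,\ell})(V_\ell X V_\ell^\dag)\bigr], \qquad \eo\Bigl(\bigoplus_\ell Y_\ell\Bigr) := U\Bigl(\bigoplus_\ell Y_\ell\otimes I_{F,\ell}\Bigr)U^\dag, \]
and dually
\[ \rs(\rho) := \bigoplus_\ell \tr_{\Hc_{F,\ell}}(V_\ell \rho V_\ell^\dag), \qquad \es\Bigl(\bigoplus_\ell Y_\ell\Bigr) := U\Bigl(\bigoplus_\ell Y_\ell\otimes \tau_{F,\ell}\Bigr)U^\dag. \]
The identities $\CE_\As=\eo\circ\ro$ and $\SE_\As=\es\circ\rs$ are then immediate by comparison with \eqref{eqn:cond_exp_blocks} and \eqref{eqn:state_ext_blocks}.

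Complete positivity of all four maps is clear since each is a composition of elementary CP operations: conjugation by $U$, conjugation by the non-square isometries $V_\ell$, partial traces, and tensoring with fixed positive operators. Unitality of $\eo$ follows from $\eo(I_{\check\As}) = U\bigl(\bigoplus_\ell I_{S,\ell}\otimes I_{F,\ell}\bigr)U^\dag = I_\Hc$, using the unital assumption $\Hc_R=0$. Unitality of $\ro$ follows from $V_\ell V_\ell^\dag = I_{S,\ell}\otimes I_{F,\ell}$ combined with $\tr_{\Hc_{F,\ell}}(I_{S,\ell}\otimes\tau_{F,\ell}) = I_{S,\ell}$, since $\tr(\tau_{F,\ell})=1$. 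Trace preservation of $\es$ is likewise immediate from $\tr(\tau_{F,\ell})=1$, while trace preservation of $\rs$ follows from $\sum_\ell V_\ell^\dag V_\ell = I_\Hc$, which holds because the unital algebra assumption forces $\Hc = \Hc_{SF}$ and the subspaces $\Hc_{S,\ell}\otimes\Hc_{F,\ell}$ form an orthogonal decomposition of $\Hc$.

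The paired trace identities follow from duality. A direct computation, analogous to the one already carried out in the proof of the Proposition preceding the theorem and based on the standard adjoint relation between partial trace and tensoring with the identity, gives $\eo^\dag = \rs$ and $\ro^\dag = \es$. Then
\[ \tr(\ro(X)\rs(\rho)) = \tr\bigl(X\,\ro^\dag(\rs(\rho))\bigr) = \tr\bigl(X\,\es(\rs(\rho))\bigr) = \tr(X\,\SE_\As(\rho)) = \tr(\CE_\As(X)\rho), \]
where the last equality is the definition $\SE_\As = \CE_\As^\dag$. The main obstacle is conceptual rather than technical: one must correctly distribute the density operators $\tau_{F,\ell}$ between the ``compression'' and ``extension'' halves of each factorization so that unitality sits on the pair $(\eo,\ro)$ while trace preservation sits on $(\es,\rs)$. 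Once the definitions above are in place, every remaining verification reduces to reading off the Wedderburn blocks.
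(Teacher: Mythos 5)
Your proposal is correct and follows essentially the same route as the paper: the paper defines $\eo=\Phi^{-1}$ and $\ro=\Phi\circ\CE_\As$ via the Wedderburn isomorphism $\Phi$ and then derives exactly the block formulas you write down directly, with the dual pair $\rs=\eo^\dag$, $\es=\ro^\dag$ and the trace identities obtained by the same duality argument. The only cosmetic difference is that you verify unitality/trace preservation explicitly rather than deferring to the structure of $\Phi$, which if anything makes the argument more self-contained.
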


\begin{proof}
Define the  CP unital linear map $\Phi:\As\rightarrow \check\As$ as:
\begin{equation*}
    \Phi(A)=\Phi\left(U\left(\bigoplus_\ell A_{S,\ell}\otimes I_{F,\ell}\right)U^\dag\right)=\bigoplus_\ell A_{S,\ell}.
\end{equation*}
Defining $V_\ell$ as in the previous subsection, we can extend the action of $\Phi$ to the full space as:
$$
    \Phi(X)= \bigoplus_\ell \frac{\tr_{\Hc_{F,\ell}}(V_\ell XV_\ell^\dag)}{\dim(\Hc_{F,\ell})}.
$$
Similarly, for any  $\check A=\bigoplus_\ell A_{S,\ell} \in\check{\As},$ we have
$$
  \Phi^{-1}\left(\check A\right) = \Phi^{-1}\left(\bigoplus_\ell A_{S,\ell}\right) = U\left(\bigoplus_\ell \check A_{S,\ell}\otimes I_{F,\ell}\right)U^\dag.
$$
so that $\Phi^{-1}\Phi=\mathcal{I}_\As$, the identity super operator over the algebra. Notice that also $\Phi^{-1}$ is unital.
We can then exploit this reduction in our problem by noting that
$\CE_\As=\Phi^{-1}\Phi\CE_\As,$
and define the following (non-square) unital factorization for $\CE_\As$:
$$\ro=\Phi\CE_\As,\quad \eo=\Phi^{-1},$$
so that $\CE_\As=\eo \ro.$ Unitality can be verified directly from the definition of $\Phi$ and its inverse.
In block decomposition, with the notation introduced above, we have:
$$\ro(X)=\bigoplus_\ell \tr_{\Hc_{F,\ell}}\left(I_{S,\ell}\otimes\tau_{F,\ell}(V_\ell XV_\ell^\dag)\right).$$ 
We can use a similar representation for $\SE_\As$ as well, which follows directly from duality. In this case we use the dual CPTP map
\begin{equation}\label{eqn:reduction}
    \rs(X)=\eo^\dag(X)=(\Phi^{-1})^{\dag}(X)=\bigoplus_\ell \tr_{\Hc_{F,\ell}}(V_\ell X V_\ell^\dag)
  \end{equation}
and, for $\check A=\bigoplus_\ell A_{S,\ell},$ we can define
\begin{equation}\label{eqn:injection}
    \es(\check A)=\ro^\dag(\check A)=U\left(\bigoplus_\ell A_{S,\ell} \otimes\tau_{F,\ell}\right)U^\dag.\end{equation}
The last equation follows from $\CE_\As=\SE^\dag_\As$ and the factorization we just defined.
\end{proof}

Notice that the explicit form of the $\rs$ and $\es$ is derived in the proof, see \eqref{eqn:reduction},\eqref{eqn:injection}.  
 Leveraging the the latter, we can obtain the reduced QSO model as described in the following instrumental result.

\begin{proposition}
\label{prop:reduction}
    Let $(\Bs,\Ys,\Ac,\Cc, \Sf)$ be a QSO model, let  $\As\subseteq\Bs$ be a sub-algebra and let $\SE_{\As}$ be a state extension associated to $\As,$  such that, for all $t\geq 0$ and $\rho_0\in\Sf,$ we have
    \begin{equation}
       \Cc\Ac^{t}[\rho_0] = \Cc\SE_\As(\SE_\As\Ac\SE_\As)^t \SE_\As[\rho_0]. 
       \label{eqn:reduction1}
    \end{equation}
     Then, if $\SE_\As$ admits CPTP factorization $\SE_\As = \es\rs,$ the reduced QSO $(\check{\As},\Ys,\check{\Ac},\check{\Cc},\check{\Sf})$ with $\check{\As} = {\rm Im}(\rs)$, $\check{\Ac} = \rs\Ac\es$, $\check{\Cc} = \Cc\es$ and $\check{\Sf} = \rs \Sf$, along with $\Phi=\rs$ solves Problem \ref{prob:single_time}.
\end{proposition}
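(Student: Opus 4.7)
The plan is to verify two things: that the reduced tuple $(\check{\Bs},\Ys,\check{\Ac},\check{\Cc},\check{\Sf})$ with $\Phi=\rs$ is a bona fide QHM model, and that it reproduces the original output exactly for every initial condition $\rho_0\in\Sf$ and every time $t\geq 0$.

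For the first part, I would invoke Theorem \ref{thm:factorizations} applied to the algebra $\Bs$: its Wedderburn decomposition yields $\check{\Bs}=\img(\rs)\cong\bigoplus_\ell\Bf(\Hc_{S,\ell})$, which is itself a $*$-algebra, while both $\rs$ and $\es$ are CPTP. It then follows that $\check{\Ac}=\rs\Ac\es$ is CPTP as a composition of CPTP maps, $\check{\Cc}=\Cc\es$ is linear, $\check{\Sf}=\rs\,\Sf\subseteq\Df(\check{\Bs})$ since $\rs$ sends density operators to density operators, and $\Phi=\rs$ itself is CPTP. The dimensional bound $\dim(\check{\Bs})\leq\dim(\As)$ is automatic, since $\check{\Bs}$ is a compressed representation of the subalgebra $\Bs\subseteq\As$.

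For the second part, the core calculation uses the factorization identity $\es\rs=\SE_\Bs$ together with the idempotence $\SE_\Bs^2=\SE_\Bs$. Expanding the output of the reduced model,
\begin{align*}
\check{\Cc}\bigl[\check{\Ac}^t[\Phi[\rho_0]]\bigr]
&= \Cc\,\es\,(\rs\Ac\es)^t\,\rs[\rho_0] \\
&= \Cc\,(\es\rs)\,\Ac\,(\es\rs)\,\Ac\cdots\Ac\,(\es\rs)[\rho_0] \\
&= \Cc\,\SE_\Bs\bigl(\Ac\,\SE_\Bs\bigr)^t[\rho_0],
\end{align*}
where the middle line collects the $t+1$ copies of $\es\rs=\SE_\Bs$ interleaved with $t$ copies of $\Ac$. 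Independently, the hypothesis \eqref{eqn:reduction1}, combined with the collapse of consecutive $\SE_\Bs$'s via idempotence, yields
\[
\Cc\,\Ac^t[\rho_0] \;=\; \Cc\,\SE_\Bs(\SE_\Bs\Ac\SE_\Bs)^t\SE_\Bs[\rho_0] \;=\; \Cc\,\SE_\Bs(\Ac\,\SE_\Bs)^t[\rho_0],
\]
so matching the two right-hand sides gives the desired equality $\Cc[\Ac^t[\rho_0]] = \check{\Cc}[\check{\Ac}^t[\Phi[\rho_0]]]$.

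I expect no significant obstacle in this verification: the argument is essentially combinatorial, with the only delicate point being the careful bookkeeping of compositions and the observation that both the hypothesis and the reduced-model expression normalize to the same canonical form $\Cc\,\SE_\Bs(\Ac\,\SE_\Bs)^t[\rho_0]$ thanks to $\SE_\Bs^2=\SE_\Bs$.
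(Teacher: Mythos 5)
Your proof is correct and follows essentially the same route as the paper: both first invoke Theorem \ref{thm:factorizations} to certify that $\check{\Bs}$ is a $*$-algebra and that $\rs,\es$ (hence $\check{\Ac}$, $\check{\Sf}$, $\Phi$) inherit the CPTP structure, and then reduce the output identity to an algebraic collapse of the composed maps. The only cosmetic difference is that the paper contracts the chain using $\rs\es=\Ic_{\check{\Bs}}$ while you use $\es\rs=\SE_\Bs$ together with $\SE_\Bs^2=\SE_\Bs$; these are interchangeable consequences of the same factorization, so the arguments coincide.
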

\begin{proof}
    Let us start by proving that the reduced model is a QSO model. From Theorem \ref{thm:factorizations} we have that $\check{\As}$ is a $*$-algebra and the maps $\es$ and $\rs$ are CPTP. Then, since $\Ac$, $\rs$ and $\es$ are CPTP, $\check{\Ac}$ is also CPTP and $\check{\Sf}$ is a set of density operators. 
    Using the definitions of the Theorem \ref{thm:factorizations}, one can directly verify that $\rs\es = \Ic_{\check{\As}}$ the identity super operator over $\check{\As}$, we have that \(\Cc\es (\rs\Ac\es)^{t} \rs[\rho_0] = \Cc\SE_\As (\SE_\As\Ac\SE_\As)^t \SE_\As[\rho_0]\) for all $\rho_0\in\Sf$ and $t\geq0$ proving that this is a solution for Problem \ref{prob:single_time} with $\Phi=\rs$.
\end{proof}

This result allows us to focus on finding a CPTP projection $\SE_\As$ such that equation \eqref{eqn:reduction1} holds. 
The core idea follows naturally from the last results and two observations: (i) the reachable space $\Rs$ and  $\Ns^\perp$ represent subspaces that support reductions for the dynamics;  (ii) composing a CPTP injection map with a CPTP dynamics and a CPTP reduction map amounts to a reduced CPTP dynamics. 
Hence given a reachable subspace or the subspace orthogonal to the non-observable subspace we shall extend them to an an operator subspace that is the image of a CPTP projection $\SE_\As$. The latter can then factorize in a CPTP injection and reduction pair.
We are then left with one open problem: how to construct an operator space that is the image of a CPTP projection $\SE_\As$, and is of minimal dimension. This shall be the focus of the next section.

\section{Building minimal algebras that admit\\ CPTP projections}
\label{sec:minimal_distorted_algebras}

\subsection{Distorted algebras and why we need them}
Given Proposition \ref{prop:reduction} above, we know we can obtain reduced CPTP models if we can find a suitable state extension. We next further investigate what kind of set ${\rm Im} (\SE_\As)$ is, and why it is key to our reduction task.
We start by recalling (see Section \ref{sec:Conditional_expectations}) that $${\rm Im} (\SE_\As)=\fix (\SE_\As)=\Dc_{\sigma}({\rm Im} (\CE_\As))=\Dc_{\sigma}(\As).$$
Sets of this form correspond to fixed point sets of general CPTP maps - not just state extensions: in fact for any CPTP map $\Ec$ there is a state extension $\bar \Ec=\lim_{T\rightarrow \infty}\frac{1}{T+1}\sum_{i=0}^T\Ec^i$ that projects onto its fixed points.

A set of the form $\Dc_{\sigma}(\As)$ with $\sigma\in\mathfrak{H}(\mathcal{H})$ is also called 
{\em $\sigma$-distorted algebra} \cite{ticozzi2017alternating,blume2010information,johnson2015general}. It is immediate (through the block-decomposition representation and using the properties of tensor product) to verify that $\Dc_{\sigma}(\As)$ is closed with respect to linear combination, adjoint, and the weighed product $X\cdot_{\sigma} Y=X \sigma^{-1} Y.$ 
Let us fix some notation: for a subset $\Sc$ of $\mathbb{C}^{n\times n}$, we denote by ${\alg}\,\Sc$ the minimal $*$-subalgebra that contains the set $\Sc,$ and similarly by ${\alg}_{\sigma}\,\Sc$ the minimal $\sigma$-distorted algebra that contains $\Sc$, closed with respect to the product $\cdot_\sigma.$ 
On the other hand, any $*$-closed algebra with respect to a modified $\cdot_\sigma$ can be shown to have the form $\Dc_{\sigma}(\As)$ for some (standard) $*$-algebra $\As$ \cite{MyThesis}, thus motivating the name: these can always be obtained as a ``distortion'' of a standard algebra.

However, not all distorted algebras are images of state extensions: this is the case if and only if the distortion state is fixed for some conditional expectation on $\As.$ Takesaki's  work \cite{takesaki1979theory} employs modular theory to characterize such states, while the equivalence  will be proved explicitly in Theorem \ref{thm:takesaki_extended}. Our reduction strategy hinges on the existence of a CPTP projection onto a distorted algebra, so we need to determine a valid density operator,  say $\sigma$,  with respect to which the relevant operator subspace,  say $\Vs$, can be closed to a distorted algebra $\alg_\sigma \Vs $ that is the image of a CPTP projection, but not only: we want the state $\sigma$ to produce the {\em minimal} such distorted algebra, i.e. given all the states $\sigma_j$ such that all $\alg_{\sigma_j}\Vs$ are images of CPTP projections, we would like to find the particular $\sigma_j^\star$ such that $\dim(\alg_{\sigma_j^\star} \Vs)$ is minimal.

A natural question comes to mind: do we really need to consider distorted algebras with respect to general states? Given a set a generators $\cal S$, the standard algebra $\alg({\cal S})$ is a particular distorted algebra (with respect to the completely mixed state), and as such admits a state extension. The following example shows how the choice of this algebra may lead to non-minimal reductions.

\begin{example}
    Consider a CPTP map $\Ac$ with a fixed point $\bar{\rho}$, i.e. $\Ac(\bar{\rho})=\bar{\rho}$. Consider as an initial condition $\Sf=\{\bar{\rho}\}$. The reachable space generated by the model $\rho(t+1) = \Ac(\rho(t))$ is then $\Rs = \Span \{\bar{\rho}\}$. When computing $\alg(\Rs)$ we obtain $\alg(\Rs) = \Span\{\Pi_i\}$ where $\Pi_{i}$ are orthogonal projectors onto the eigenspaces of $\bar{\rho}$ and thus $\dim(\alg(\Rs))$ is the number of distinct eigenvalues of $\bar{\rho}$.

    However, we can observe that choosing $\check{\rho}(0)=1$, $\check{\Ac} = 1$ and $\es(\check{\rho}) = \bar{\rho}\check{\rho}$ we have that the model $\check{\rho}(t+1) = \check{\Ac}[\check{\rho}(t)]$  and $\rho(t) = \es(\check{\rho}(t)),$ with initial condition $\check{\rho}(0)=1$, provides the correct trajectory at all times, i.e. $\rho(t)=\bar{\rho}$ for all $t\geq0$, and, since the dimension of this model is 1, it must also be minimal. 
    This minimal model can be obtained from the reachable subspace by closing $\Rs$ to a distorted algebra using $\bar\rho$ to define the modified product, trivially obtaining
    $\alg_{\bar{\rho}}(\Rs)=\Rs$. This proves that, in this case, $\Rs$ is in fact a distorted algebra of dimension $1$. Furthermore $\tr[\cdot]\bar{\rho}$ is a CPTP projection onto $\Rs$ that can be factorized in $\Rc(\cdot) = \tr[\cdot]$ and $\Jc(\cdot) = \cdot \bar{\rho},$ which lead to the minimal model introduced above. 
\qed
\end{example}
This simple example shows that, although closing $\Rs$ to an algebra provides a reduced quantum model, it is possible that the reduced model we obtain in this way may not be minimal in size. Closing $\Rs$ to a distorted algebra can instead lead to smaller reduced models. This fact will be formally proved in the following.

The rest of the section will be devoted to finding a distorted algebra that contains a given operator subspace and is the image of a CPTP projector that leads to an optimal reduction. This is achieved by leveraging a number of existing and new results on conditional expectations.

\subsection{$\mathscr{A}$-factorized states and conditional expectations}

We first need to introduce a new concept, that captures density operators and algebras sharing compatible block-diagonal structures.

\begin{definition}
    Let $\As\subseteq\Bf(\Hc)$ be a $*$-algebra. 
    We say that an operator $\sigma\in\Bf(\Hc)$ is \em $\As$-factorized, and write $\sigma\comp\As,$ if $\sigma$ can be written as the product $\sigma = \sigma_A\sigma_C$ for some $\sigma_A\in\As$ and $\sigma_C\in\As'$.

   By extension, we say that $\sigma$ is $\As_\mu$-factorized and write $\sigma\comp\As_\mu$ if it is compatible with the corresponding un-distorted algebra, i.e. $\sigma\comp\Dc_\mu^{-1}(\As_\mu)$.
\end{definition}

Recalling the Wedderburn decomposition of the algebra and its commutant given in equations \eqref{eq:algblock} and \eqref{eq:commblock}, we have that $\sigma\sim\As$ if and only if 
\begin{equation}
    \sigma = U\left(\bigoplus_\ell \sigma_{S,\ell}\otimes \tau_{F,\ell} \oplus \zero_R \right)U^\dag,
\end{equation}
for some $\sigma_{S,\ell}\in\Df(\Hc_{S,\ell})$, and $\tau_{F,\ell}\in\Df(\Hc_{F,\ell})$.
Hence, we have that, given an algebra $\As$, any $\sigma\in\As$, $\sigma\in\As'$ and in particular $\sigma\in\zentrum(\As),$ satisfy $\sigma\comp\As$. The proof follows by comparing the block structures of $\As$ and $\As'$.

The notion of $\As$-factorized operators helps us harness the block structure provided by the Wedderburn decomposition and is of fundamental importance because it provides an intuitive interpretation of the necessary and sufficient conditions for the existence of a CPTP projection on a distorted algebra provided by Takesaki theorem in the finite-dimensional case. This connection is explicitly proved next.

\begin{theorem}[Takesaki theorem and $\As$-factorizability]
\label{thm:takesaki_extended}
    Let $\sigma$ be a full-rank positive-definite operator, $\As\subseteq\Bf(\Hc)$ a unital $*$-algebra and $\As_\sigma=\Dc_\sigma(\As)$ a $\sigma$-distorted algebra. The following conditions are equivalent:
    \begin{enumerate}
        \item $\exists$ $\CE_\As$ that preserves ${\sigma}$;
        \item $\exists$ $\SE_\As$ such that $\im(\SE_\As)= \As_\sigma$;
        \item $\sigma\comp\As_\sigma$.
    \end{enumerate}
\end{theorem}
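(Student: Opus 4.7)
The plan is to prove both implications by directly exploiting the block representations of conditional expectations and state extensions, equations \eqref{eqn:cond_exp_blocks} and \eqref{eqn:state_ext_blocks}. Both representations are parametrized entirely by the factor states $\tau_{F,\ell}$, and compatibility is precisely the condition that $\rho$ share the block-tensor structure dictated by the Wedderburn decomposition of $\As$.

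For $(1)\Rightarrow(2)$, I would start from the observation that since $\As$ is unital, $I\in\As$, hence $\rho = \Dc_\rho(I)\in\Dc_\rho(\As) = \As_\rho = \im(\SE_\As)$. Any element in the image of a state extension dual to a conditional expectation onto $\As$ takes, by \eqref{eqn:state_ext_blocks}, the form $U\bigl(\bigoplus_\ell Y_{S,\ell}\otimes \tau_{F,\ell}\bigr)U^\dag$ for some $Y_{S,\ell}\in\Bf(\Hc_{S,\ell})$, where the $\tau_{F,\ell}$ are the factor states of $\SE_\As$. Applying this representation to $\rho$ itself yields $\rho = U\bigl(\bigoplus_\ell \rho_{S,\ell}\otimes \tau_{F,\ell}\bigr)U^\dag$, and positivity together with full-rankness of $\rho$ forces each $\rho_{S,\ell}$ to be a positive-definite block, which is exactly $\rho\comp\As$.

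For $(2)\Rightarrow(1)$, I would read off the factor states $\tau_{F,\ell}$ from the compatible decomposition of $\rho$ and use them to build a conditional expectation $\CE_\As$ via \eqref{eqn:cond_exp_blocks}, setting $\SE_\As = \CE_\As^\dag$. By \eqref{eqn:state_ext_blocks} its image is $\{U(\bigoplus_\ell Y_{S,\ell}\otimes \tau_{F,\ell})U^\dag : Y_{S,\ell}\in\Bf(\Hc_{S,\ell})\}$. The remaining step is to verify $\im(\SE_\As) = \Dc_\rho(\As)$: for any $A = U(\bigoplus_\ell A_{S,\ell}\otimes I_{F,\ell})U^\dag\in\As$, the fact that $\rho^{\um}$ inherits the compatible structure yields $\Dc_\rho(A) = U(\bigoplus_\ell (\rho_{S,\ell}^{\um}A_{S,\ell}\rho_{S,\ell}^{\um})\otimes \tau_{F,\ell})U^\dag$. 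Full-rankness of $\rho$ makes each $\rho_{S,\ell}$ invertible, so $A_{S,\ell}\mapsto \rho_{S,\ell}^{\um}A_{S,\ell}\rho_{S,\ell}^{\um}$ is a bijection of $\Bf(\Hc_{S,\ell})$, producing exactly the previously described image. The equality $\im(\SE_\As) = \fix(\SE_\As)$ then follows from the idempotency $\SE_\As^2=\SE_\As$ noted in Section \ref{sec:Conditional_expectations}.

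I expect the main obstacle to be conceptual rather than computational: the nontrivial direction is necessity, which asserts that the mere existence of a state extension with image $\As_\rho$ forces a rigid block compatibility on $\rho$. This rigidity is entirely dictated by the Wedderburn decomposition of $\As$, which locks the $S$-$F$ tensor factorization of every element of $\im(\SE_\As)$; full-rankness of $\rho$ then turns this block-structural constraint into exactly the compatibility condition. The only bookkeeping subtlety is aligning the unitary $U$ and the isometries $V_\ell$ of \eqref{eqn:cond_exp_blocks}--\eqref{eqn:state_ext_blocks} with the decomposition of $\rho$, but this is harmless since the Wedderburn decomposition pins these data down uniquely up to the choice of factor states $\tau_{F,\ell}$.
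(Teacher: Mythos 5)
Your proof is correct, but the sufficiency direction takes a genuinely different route from the paper. For $(1)\Rightarrow(2)$ you argue essentially as the paper does: the hypothesis plus the classification of conditional expectations (\cite[Prop.\ 1.5]{wolf2012quantum}, reflected in \eqref{eqn:state_ext_blocks}) forces every element of $\im(\SE_\As)$ into the form $U\bigl(\bigoplus_\ell Y_{S,\ell}\otimes\tau_{F,\ell}\bigr)U^\dag$, and since $\rho=\Dc_\rho(I)\in\As_\rho$ by unitality (the paper instead invokes Proposition \ref{prop:distorted_algebra} for $\rho\in\As_\rho$), compatibility follows. For $(2)\Rightarrow(1)$, however, the paper does \emph{not} construct anything: it reduces condition (1) to invariance of the distorted algebra under the modular action $\Mc_\rho(\cdot)=\rho^{\um}\cdot\rho^{-\um}$ via Takesaki's theorem, and then checks that compatibility implies $\Mc_\rho$-invariance by a block computation. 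You instead build the conditional expectation explicitly from the factor states $\tau_{F,\ell}$ read off from the compatible decomposition of $\rho$, and verify directly that the image of its dual equals $\Dc_\rho(\As)$ using the invertibility of the $\rho_{S,\ell}$; this bypasses modular theory entirely and is arguably more elementary and self-contained in finite dimensions, at the cost of losing the explicit link to Takesaki's criterion that motivates the theorem's name and its infinite-dimensional pedigree. The one step you should make explicit is that the map defined by \eqref{eqn:cond_exp_blocks} with your chosen full-rank $\tau_{F,\ell}$ really \emph{is} a conditional expectation (the module property $\CE_\As(AB)=A\CE_\As(B)$ must be verified, not just quoted from the classification of existing conditional expectations); this is a routine block computation, but without it your construction only produces a candidate map.
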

\begin{proof}
  Takesaki theorem \cite{TAKESAKI1972306} implies that 1) above is equivalent to $\As_\sigma$ being invariant for the modular action defined as $\Mc_\sigma(\cdot)\equiv \sigma^\um \cdot \sigma^{-\um}$, i.e. $\Mc_\sigma(\As_\sigma)\subseteq\As_\sigma$. The proof of the equivalence of 1) and 2) in the finite-dimensional case can be found in \cite[Theorem 9.2]{petz2007quantum} and \cite[Theorem 3 and 4]{johnson2015general}.

   We first show that condition 3), implies $\Mc_\sigma(\As_\sigma)\subseteq\As_\sigma.$ 
   Assume $\As$ has Wedderburn decomposition $ \mathscr{A} =  U\left(\bigoplus_\ell \Bf(\Hc_{S,\ell}) \otimes I_{F,\ell}  \right)U^\dag$: thus $\sigma\comp \mathscr{A}$ (or equivalently $\sigma\comp \mathscr{A}$) implies $\sigma = U\left(\bigoplus_\ell \sigma_{S,\ell} \otimes \tau_{F,\ell}  \right)U^\dag$. Then, we have \begin{align*}
   \Mc_\sigma(\As_\sigma) &= U\left(\bigoplus_\ell \sigma_{S,\ell}^{\um}\Bf(\Hc_{S,\ell})\sigma_{F,\ell}^{-\um} \otimes \tau_{F,\ell}^\um \tau_{F,\ell}\tau_{F,\ell}^{-\um}  \right)U^\dag\\ &\subseteq U\left(\bigoplus_\ell \Bf(\Hc_{S,\ell}) \otimes \tau_{F,\ell} \right)U^\dag = \As_\sigma,
   \end{align*} 
   and thus $\As_\sigma$ is $\Mc_\sigma$-invariant.
    
    To conclude, we prove that 2) implies 3). By hypothesis we have that there exists a CPTP map $\SE_{\As}$, such that $\im(\SE_\As) = \As_\sigma$.  From Proposition \ref{prop:state_ext_blocks}, we than have that such map has the form \eqref{eqn:state_ext_blocks}, hence, its image has the structure $\im(\SE_\As) = U\left(\bigoplus_\ell \Bf(\Hc_{S,\ell})\otimes \tau_{F,\ell}\right)U^\dag = \As_\sigma$. From Proposition \ref{prop:distorted_algebra} we have that $\sigma\in\As_\sigma$, and thus $\sigma$ must have the block structure $\sigma = U\left(\bigoplus_\ell \sigma_{S,\ell} \otimes \tau_{F,\ell} \right)U^\dag $ and is thus $\As$-factorized with $\As = \Dc_{\sigma}^{-1}(\As_\sigma) = U\left(\bigoplus_\ell \Bf(\Hc_{S,\ell})\otimes I_{F,\ell}\right)U^\dag$. 
\end{proof}

We have thus shown that the existence of a CPTP projection onto a distorted algebra $\Dc_\sigma(\As)$ is equivalent to the fact that $\sigma$ is $\As_\sigma$-factorized, or equivalently that exists a conditional expectation on $\As$ preserving $\sigma.$
Given an operator space $\Vs$ we now want to understand how to choose $\sigma\in\Bf(\Hc)$ so that $\sigma\sim\alg_\sigma\Vs,$ and the latter algebra is minimal.

A first possibility is to pick a full rank density operator $\sigma$  inside $\alg \Vs$. The next result provides a necessary and sufficient condition for an operator $\sigma\in\alg\Vs$ to be such that $\sigma\comp\alg_\sigma\Vs$ and shows that, for such states, closing with respect to a modified product never leads to larger algebras.
\begin{theorem}\label{thm:minimaldistorted}
    Consider an operator space $\Vs\subseteq\Bf(\Hc)$ with full support and consider $\sigma\in\alg\Vs$. 
    Let then  
    \begin{align*}
        \alg\Vs &= U\left(\bigoplus_\ell \Bf(\Hc_{S,\ell})\otimes I_{F,\ell}\right)U^\dag,\\
        \sigma &= U\left(\bigoplus_\ell \sigma_{S,\ell} \otimes I_{F,\ell} \right)U^\dag
    \end{align*}
    be the Wedderburn decomposition of $\alg\Vs$ and the structure of $\sigma
     $ with $\sigma_{S,\ell}\in\Bf(\Hc_{S,\ell})$, $\sigma_{S,\ell}>0$. Then: 
\begin{enumerate}
        \item \[\Vs\subseteq\alg_\sigma\Vs\subseteq\alg\Vs;\]
        \item  \[\alg_\sigma\Vs = U\left(\bigoplus_\ell \Dc_{\sigma_{S,\ell}}(\As_{S,\ell})\otimes I_{F,\ell}\right) U^\dag\]       where  $\As_{S,\ell}\subseteq\Bf(\Hc_{S,\ell})$ are (sub)algebras;
        \item $\sigma\comp\alg_\sigma \Vs$ if and only if $\As_{S,\ell} = \Bf(\Hc_{S,\ell})$ $\forall l$.
    \end{enumerate}
\end{theorem}
\begin{proof} 

    Given a basis $\{V_i\}_{i=0,\dots}$ for $\Vs$ we have that $V_i = U\left(\bigoplus_\ell V_{S,\ell}^i\otimes I_{F,\ell}\right)U^\dag $ with $\alg\{V_{S,\ell}^i\}_{i=0,\dots}=\Bf(\Hc_{S,\ell})$ for all $\ell$. Then we have that $\Dc_\sigma^{-1}(\Vs) = \Span\{W_i\}$ with \(W_i = U\left(\bigoplus_\ell \sigma_{S,\ell}^{-\um}V_{S,\ell}^i\sigma_{S,\ell}^{-\um} \otimes I_{F,\ell}\right)U^\dag\). Observing that linear combinations, products, and adjoints of elements that have the structure of $W_i$ maintain the same block-diagonal structure, and using Proposition \ref{prop:distorted_algebra}, i.e. $\alg_\sigma\Vs = \Dc_\sigma(\alg(\Dc_\sigma^{-1}(\Vs)))$, we have $ \alg_\sigma\Vs = U\left(\bigoplus_\ell \sigma_{S,\ell}^{\um}  \As_{S,\ell}   \sigma_{S,\ell}^{\um}\otimes I_{F,\ell}\right)U^\dag$ where $\As_{S,\ell}\equiv \alg(\sigma_{S,\ell}^{-\um}V_{S,\ell}^i\sigma_{S,\ell}^{-\um})\subseteq\Bf(\Hc_{S,\ell})$, concluding the proof of statement 2). 
    
    From $\As_{S,\ell}\subseteq\Bf(\Hc_{S,\ell})$ for all $\ell$ it immediately follows that  $\Dc_{\sigma_{S,\ell}}(\As_{S,\ell})\subseteq\Bf(\Hc_{S,\ell})$ for all $\ell$ and thus we can conclude that $\Vs\subseteq\alg_\sigma\Vs\subseteq\alg\Vs$, proving statement 1).
    
    To prove the third statement and conclude the proof we can observe that for every block $\ell$, we have $\Vs_{S,\ell}\equiv \Span\{V_{S,\ell}^i\}_{i}$ such that $\alg\Vs_{S,\ell} = \Bf({\Hc_{S,\ell}})$ and we can thus apply Lemma \ref{lem:little_blocks} on every block $\As_{S,\ell}$ for all $\ell$.
\end{proof}

The necessary and sufficient condition of statement 3) is however difficult to verify in general. Nonetheless, we next show that any (full-rank) operator in the center of $\alg\Vs$,  $\sigma\in\zentrum(\alg\Vs)$ guarantees that $\sigma\sim\alg_\sigma\Vs$. This is summarized in the following Corollary. 
\begin{corollary}
    Consider an operator space $\Vs\subseteq\Bf(\Hc)$ with full support. Let us consider a  positive-definite operator in the center of $\alg\Vs$, i.e. $\sigma\in\zentrum(\alg\Vs)$. Then it holds that $\Vs\subseteq\alg_\sigma\Vs\subseteq\alg\Vs$ and $\sigma\comp\alg_\sigma\Vs$.
\end{corollary}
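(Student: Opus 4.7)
The plan is to reduce the corollary directly to Theorem \ref{thm:minimaldistorted} by exploiting the explicit block structure of the center of $\alg(\Vs)$. First I would write down the Wedderburn decomposition $\alg(\Vs) = U\left(\bigoplus_l \Bf(\Hc_{S,l})\otimes I_{F,l}\right)U^\dag$ (which is available because $\Vs$ has full support, so the algebra it generates is unital). Then I would use the standard fact, recalled just before the definition of compatibility, that $\zentrum(\alg(\Vs)) = U\left(\bigoplus_l \Rb_l I_{S,l}\otimes I_{F,l}\right)U^\dag$. Consequently, any positive-definite $\rho \in \zentrum(\alg(\Vs))$ must take the form $\rho = U\left(\bigoplus_l r_l\, I_{S,l}\otimes I_{F,l}\right)U^\dag$ with $r_l>0$, so in the notation of Theorem \ref{thm:minimaldistorted} we have $\rho_{S,l} = r_l I_{S,l}$.

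Next I would substitute this into the expression for the algebra $\As_{S,l}$ appearing in statement 2) of Theorem \ref{thm:minimaldistorted}. Since $\rho_{S,l}^{-1/2} = r_l^{-1/2} I_{S,l}$ is a scalar on each block, the conjugation $V^i_{S,l}\mapsto \rho_{S,l}^{-1/2} V^i_{S,l}\rho_{S,l}^{-1/2} = r_l^{-1} V^i_{S,l}$ is simply a rescaling. Hence $\As_{S,l} = \alg\{r_l^{-1} V^i_{S,l}\}_i = \alg\{V^i_{S,l}\}_i$, and by the very definition of the Wedderburn decomposition of $\alg(\Vs)$, the latter coincides with $\Bf(\Hc_{S,l})$ for every $l$.

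With all $\As_{S,l}=\Bf(\Hc_{S,l})$ in hand, I can directly invoke statement 3) of Theorem \ref{thm:minimaldistorted} to conclude $\rho \comp \alg(\Dc_\rho^{-1}(\Vs))$, while statement 1) of the same theorem gives the chain of inclusions $\Vs \subseteq \alg_\rho(\Vs) \subseteq \alg(\Vs)$. The argument is essentially a bookkeeping exercise on block structures, and I do not anticipate a conceptual obstacle; the only mildly delicate point is to remember that a scalar multiple of the identity on each central block commutes with everything in that block, which is exactly what collapses the $\Dc_\rho^{-1}$-conjugation to a harmless rescaling and keeps the generated algebra maximal on each factor.
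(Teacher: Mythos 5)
Your proposal is correct and follows essentially the same route as the paper's own proof: identify the center's block form $\rho_{S,l}=\lambda_l I_{S,l}$, observe that the $\Dc_\rho^{-1}$-conjugation then reduces to a scalar rescaling on each block so that $\As_{S,l}=\alg\{V_{S,l}^i\}_i=\Bf(\Hc_{S,l})$, and invoke statements 1) and 3) of Theorem \ref{thm:minimaldistorted}. Your version merely makes the harmless-rescaling step more explicit than the paper does.
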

\begin{proof}
    By Theorem \ref{thm:minimaldistorted}, we have $\Vs\subseteq\alg_\sigma\Vs\subseteq\alg\Vs$, since $\zentrum(\alg\Vs)\subseteq\alg\Vs$. To prove the fact that $\sigma\comp\alg_\sigma\Vs$ we can simply observe that for $\sigma\in\zentrum(\alg\Vs)$ we have that $\sigma = U\left(\bigoplus_\ell \lambda_\ell I_{S,\ell} \otimes I_{F,\ell} \right)U^\dag$ with $\lambda_\ell\in\Rb$ which implies that $\As_{S,\ell} \equiv \alg(I_{S,\ell}V_{S,\ell}^iI_{S,\ell})=\Bf(\Hc_{S,\ell})$ for all $\ell$ and thus satisfies the third condition of Theorem \ref{thm:minimaldistorted}. 
\end{proof}

This corollary shows that picking $\sigma\in\zentrum(\alg\Vs)$ not only guarantees that $\alg_\sigma\Vs$ is the image of a CPTP projection, but also $\dim(\alg_\sigma\Vs)\leq\dim(\alg\Vs),$ thus possibly allowing for a larger reduction than that provided by $\alg \Vs$. This is confirmed by the next example.

\begin{example}
Let $\sigma_k$ with $k=x,y,z$ indicate the Pauli matrices and $\sigma_0$ indicate the identity.
Let us define a positive-definite operator $\mu = 2\sigma_0 + \sigma_z$ and consider the following operator space $\Vs = \Span\{\mu\otimes\sigma_k,\, k=x,y,z\}$. Then $ \alg\Vs = \Span\{\sigma_j\otimes\sigma_k, \sigma_0\otimes\sigma_k, \, k=0,x,y,z\} \simeq \Bf(\Cb^2)\oplus \Bf(\Cb^2)$, with $\dim(\alg\Vs)=8$ and we can observe that $\zentrum(\alg\Vs) = \Span\{\sigma_j\otimes\sigma_0, \, j=0,z\}$. 

Let us then consider $\sigma = \mu\otimes\sigma_0 \in\zentrum(\alg\Vs)$, then we have that $\Dc_\sigma^{-1}(\Vs) = \Span\{\sigma_0\otimes\sigma_k, \, k=x,y,z\}$. This implies that $\alg(\Dc_\sigma^{-1}(\Vs)) = \Span\{\sigma_0\otimes\sigma_k, \, k=0,x,y,z\} =  \sigma_0\otimes\Bf(\Cb^2)$ and thus $\alg_\sigma\Vs = \mu\otimes \Bf(\Cb^2)$ with  $\dim(\alg_\sigma\Vs)=4$ and $\alg_\sigma\Vs$ is $\Mc_\sigma$-invariant or, in other words, $\sigma\comp\alg_\sigma\Vs$.
\end{example}

\subsection{Distorted algebras of minimal dimensions}

Up to this point, we only showed that choosing $\sigma\in\zentrum(\alg\Vs)$ is such that $\sigma\sim\alg_\sigma\Vs$ and $\dim(\alg_\sigma\Vs)\leq\dim(\alg\Vs)$, but one could ask if there exist other states $\xi\in\Bf(\Hc)$ such that $\xi\comp\alg_\xi(\Vs),$ and such that $\dim(\alg_\xi\Vs)\leq\dim(\alg_\sigma\Vs)$ for all $\sigma\in \zentrum(\alg\Vs)$ We  now show that this is not possible and that the optimal reduction can be always obtained by considering an operator $\sigma$ in the center $\zentrum(\alg\Vs)$.

\begin{theorem}[Minimal distorted algebra]
  Consider an operator subspace $\Vs\subseteq\Bf(\Hc)$ and a positive-definite operator $V\in\Vs$, $V=V^\dag>0$. Define $$\sigma \equiv \Pi_{\zentrum(\alg\Vs)}\left[V\right]\in\zentrum(\alg\Vs)$$ where $\Pi_{\zentrum(\alg\Vs)}$ is the orthogonal projection onto the center ${\zentrum(\alg\Vs)}$. Then, $\alg_\sigma\Vs$ is the distorted algebra of minimal dimension that contains $\Vs$ and such that $\sigma\sim\alg_\sigma\Vs$.
\end{theorem}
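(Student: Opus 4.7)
My plan is to prove two things about $\alg_\sigma(\Vs)$: it admits a state extension, and it is contained in every other distorted algebra with this property.

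Existence is handled by checking that $\sigma$ satisfies the hypotheses of the Corollary to Theorem \ref{thm:minimaldistorted}. The Wedderburn decomposition $\alg(\Vs) = U\bigl(\bigoplus_l \Bf(\Hc_{S,l}) \otimes I_{F,l}\bigr)U^\dag$, combined with $V \in \alg(\Vs)$ and $V > 0$, forces each block $V_{S,l}$ to be positive-definite. The Hilbert-Schmidt projection onto the abelian center $U(\bigoplus_l \Cb\, I_{S,l} \otimes I_{F,l})U^\dag$ then yields $\sigma = U(\bigoplus_l \alpha_l\, I_{S,l} \otimes I_{F,l})U^\dag$ with strictly positive coefficients $\alpha_l = \tr(V_{S,l})/\dim(\Hc_{S,l})$. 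Hence $\sigma > 0$ and $\sigma \in \zentrum(\alg(\Vs))$, so the Corollary gives $\sigma \comp \alg_\sigma(\Vs)$, and Theorem \ref{thm:takesaki_extended} produces the state extension onto $\alg_\sigma(\Vs)$.

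For minimality, let $\Bs$ be any distorted algebra containing $\Vs$ that admits a state extension; by Takesaki we may write $\Bs = \Dc_\tau(\As_\tau)$ with $\tau\comp\As_\tau$, and then $\alg_\tau(\Vs) \subseteq \Bs$. My plan is to invoke Lemma \ref{lem:distorted_algebra_inclusion} with $\rho=\tau$ and $\sigma_{\text{lem}} = \sigma$: once $\tau\comp\alg_\tau(\Vs)$ is established (a subtle point, as a priori only $\tau\comp\As_\tau$ is given), the inclusion $\alg_\sigma(\Vs) \subseteq \alg_\tau(\Vs) \subseteq \Bs$ reduces to verifying the single condition $\sigma \in \alg_\tau(\Vs)$.

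The main obstacle is to prove $\sigma \in \alg_\tau(\Vs)$ for an arbitrary admissible $\tau$. My intended approach is structural: the compatibility $\tau \comp \alg_\tau(\Vs)$ forces the block decomposition $\alg_\tau(\Vs) = U_\tau\bigl(\bigoplus_k \Bf(\Hc_{S,k}^\tau) \otimes \tau_{F,k}\bigr) U_\tau^\dag$, and the inclusion $\Vs \subseteq \alg_\tau(\Vs)$ constrains the $F$-component of every $V \in \Vs$ to coincide with $\tau_{F,k}$ on each block. Propagating this constraint through undistorted products, the Wedderburn decomposition of $\alg(\Vs)$ aligns with (a refinement of) that of $\alg_\tau(\Vs)$, so the central projectors $P_l$ of $\alg(\Vs)$ are recovered as sums of central projectors of $\alg_\tau(\Vs)$, placing $\sigma = \sum_l \alpha_l P_l$ inside $\alg_\tau(\Vs)$. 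The technically delicate step is carrying out this block-structure reconciliation with precise bookkeeping of the $\tau_{F,k}$-dressings when $U_\tau \neq U$; this is where I expect the bulk of the proof to lie.
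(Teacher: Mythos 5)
Your overall skeleton is the paper's: establish existence via the Corollary to Theorem~\ref{thm:minimaldistorted} plus Theorem~\ref{thm:takesaki_extended}, and obtain minimality by feeding the competitor's distortion state into Lemma~\ref{lem:distorted_algebra_inclusion}, so that everything reduces to checking $\Dc_\tau^{-1}(\sigma)\in\alg(\Dc_\tau^{-1}(\Vs))$. The existence half is correct as written. The problem is that the mechanism you sketch for the crucial membership check would not work. The minimal central projections of $\alg(\Vs)$ are, in the frame adapted to $\alg(\Dc_\tau^{-1}(\Vs))=U(\bigoplus_l\Bf(\Hc_{S,l})\otimes I_{F,l})U^\dag$, of the form $Q^j_{S,l}\otimes R^k_{F,l}$ with $Q^j$ a minimal central projection of $\alg\{V^i_{S,l}\}$ and $R^k$ a spectral projection of $\tau_{F,l}$; such an operator lies in the $l$-th block $\Bf(\Hc_{S,l})\otimes\tau_{F,l}$ of $\alg_\tau(\Vs)$ only if $R^k$ is proportional to $\tau_{F,l}$, which fails whenever $\tau_{F,l}$ is not completely mixed. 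So the individual central projectors of $\alg(\Vs)$ are generically \emph{not} elements of $\alg_\tau(\Vs)$, and "$\sigma=\sum\alpha_lP_l$ with each $P_l$ recoverable inside $\alg_\tau(\Vs)$" is not a viable route. What saves the statement --- and what the paper actually does --- is that $\sigma$ is not an arbitrary central element but the projection of a $V\in\Vs$: compatibility forces $V=U\left(\bigoplus_l V_{S,l}\otimes\tau_{F,l}\right)U^\dag$, hence $\alg(\Vs)=U\left(\bigoplus_l\alg\{V^i_{S,l}\}\otimes\alg\{\tau_{F,l}\}\right)U^\dag$, and since $\tau_{F,l}$ is already central in the abelian algebra it generates, the orthogonal projection onto $\zentrum(\alg(\Vs))$ acts only on the $S$-factor and leaves the dressing $\tau_{F,l}$ intact: $\sigma=U\left(\bigoplus_l\sigma_{S,l}\otimes\tau_{F,l}\right)U^\dag$, which manifestly satisfies $\Dc_\tau^{-1}(\sigma)\in\alg(\Dc_\tau^{-1}(\Vs))$. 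Note also that no reconciliation of two unitary frames $U_\tau\neq U$ is needed: the entire computation, including the structure of $\alg(\Vs)$ and its center, is carried out in the single frame adapted to $\alg(\Dc_\tau^{-1}(\Vs))$.

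The second issue you flag yourself but do not resolve: Lemma~\ref{lem:distorted_algebra_inclusion} needs $\tau\comp\alg_\tau(\Vs)$, whereas an arbitrary competitor $\Bs=\Dc_\tau(\As)$ only supplies $\tau\comp\As$ for the possibly larger $\As$, and compatibility does not pass to subalgebras (e.g.\ any state is compatible with $\Bf(\Cb^2)$ but only diagonal ones with the diagonal subalgebra). Be aware that the paper's own proof does not close this gap either --- it quantifies only over $\rho$ satisfying $\rho\comp\alg_\rho(\Vs)$ and concludes minimality among algebras of that form --- so you are not behind the paper here, but your minimality argument is incomplete as stated for the same reason. After the two inclusions $\alg(\Dc_\sigma^{-1}(\Vs))\subseteq\Dc_\sigma^{-1}(\alg_\tau(\Vs))$ and $\Dc_\sigma^{-1}(\alg_\tau(\Vs))=\alg(\Dc_\tau^{-1}(\Vs))$ are in hand, applying $\Dc_\sigma$ and Proposition~\ref{prop:distorted_algebra} does give $\alg_\sigma(\Vs)\subseteq\alg_\tau(\Vs)\subseteq\Bs$, as you indicate.
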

\begin{proof}
    Consider a full rank density operator $\xi\in\Df(\Hc)$ and define $\As \equiv \alg(\Dc_\xi^{-1}(\Vs))$. Being $\As$ a $*$-algebra, it admits a Wedderburn decomposition: $\As = U\left(\bigoplus_\ell \Bf(\Hc_{S,\ell})\otimes I_{F,\ell}\right)U^\dag$. Let $\{V_i\}$ be a set of generators for $\Vs$, and for all $i$ define $W_i=\Dc_\xi^{-1}(V_i).$ From the structure of $\As$ we also have that $W_i = U\left(\bigoplus_\ell W_{S,\ell}^i\otimes I_{F,\ell} \right)U^\dag,$ and that $\alg\{W_{S,\ell}^i\} = \Bf(\Hc_{S,\ell})$ for all $\ell$. Moreover, assuming $\xi\comp\alg(\Dc_\xi^{-1}(\Vs))$ we have that $\xi$ has the structure $\xi = U\left(\bigoplus_\ell \xi_{S,\ell}\otimes\tau_{F,\ell} \right)U^\dag$. Combining these two observations we have ${\alg_\xi\Vs}= U\left( \bigoplus_\ell \xi_{S,\ell}^{\um} \Bf(\Hc_{S,\ell}) \xi_{S,\ell}^{\um} \otimes \tau_{F,\ell} \right)U^\dag.$ 
    
    By definition, we have that $\Vs\subseteq\alg_\xi\Vs$. This, with the structure of the basis of $\Dc_\xi^{-1}(\Vs)$ we just described, implies that for the basis elements $\Vs = \Span\{V_i\}$ we can write 
    \[V_i = U\left(\bigoplus_\ell \underbrace{\xi_{S,\ell}^\um W_{S,\ell}^i\xi_{S,\ell}^\um}_{\equiv V_{S,\ell}^i} \otimes \tau_{F,\ell} \right)U^\dag. \]
    We can then compute the algebra $\alg\Vs$. Since the $V_i$ share a common block-diagonal structure, and the latter is invariant for sum, multiplication and adjoint, we have that $\alg\Vs = U\left(\bigoplus_\ell \alg\{V_{S,\ell}^i\}_i\otimes\alg\{\tau_{F,\ell}\}\right)U^\dag$. It follows that its center has the following structure: $\zentrum(\alg\Vs) = U\left(\bigoplus_\ell \zentrum(\alg\{V_{S,\ell}^i\})\otimes \zentrum(\alg\{\tau_{F,\ell}\})\right)U^\dag$.
    
    Thus any $V\in\Vs$ is of the form $V = U\left(\bigoplus_\ell V_{S,\ell}\otimes \tau_{F,\ell} \right)U^\dag$ and, since $\tau_{F,\ell}\in\zentrum(\alg(\tau_{F,\ell}))$, we have $$\sigma = \Pi_{\zentrum(\alg\Vs)}[V] = U\left(\bigoplus_\ell \sigma_{S,\ell}\otimes \tau_{F,\ell} \right)U^\dag$$ where $\sigma_{S,\ell}\in\zentrum(\alg\{V_{S,\ell}^i\})$. Moreover, we have that 
    $\Dc_\xi^{-1}(\sigma)\in\alg(\Dc_\xi^{-1}(\Vs)),$ since 
    \begin{align*}
        \Dc_\xi^{-1}(\sigma) &= U\left( \bigoplus_\ell \xi_{S,\ell}^{-\um} \sigma_{S,\ell} \xi_{S,\ell}^{-\um} \otimes \tau_{F,\ell}^{-\um} \tau_{F,\ell} \tau_{F,\ell}^{-\um} \right) U^\dag\\ &= U\left( \bigoplus_\ell \xi_{S,\ell}^{-\um} \sigma_{S,\ell} \xi_{S,\ell}^{-\um} \otimes I_{F,\ell} \right) U^\dag
    \end{align*}
    and $ \alg(\Dc_\xi^{-1}(\Vs))=U\left( \bigoplus_\ell \Bf(\Hc_{S,\ell}) \otimes I_{F,\ell} \right) U^\dag$. This shows that $\sigma$ satisfies the hypothesis of Lemma \ref{lem:distorted_algebra_inclusion} and we are guaranteed that \begin{equation}\label{eq:first}
    \alg(\Dc_\sigma^{-1}(\Vs))\subseteq\alg(\Dc_\xi^{-1}(\Vs)). \end{equation} Next, observe that 
        \begin{align}  &\Dc_\sigma^{-1}\left(\alg_\xi\Vs\right)=\nonumber\\ 
        &= U\left(\bigoplus_\ell \sigma^{-\um}_{S,\ell}\xi_{S,\ell}^\um \Bf(\Hc_{S,\ell}) \xi_{S,\ell}^\um \sigma^{-\um}_{S,\ell} \otimes \tau_{F,\ell}^{-\um}\tau_{F,\ell}\tau_{F,\ell}^{-\um} \right)U^\dag\nonumber\\
        & = U\left(\Bf(\Hc_{S,\ell})\otimes I_{F,\ell}\right)U^\dag = \alg(\Dc_\xi^{-1}(\Vs)).\label{eq:second}
    \end{align}
    Combining \eqref{eq:first} and \eqref{eq:second} we have that {\em for any $\xi\comp\alg_\xi\Vs,$} we have $\alg\left(\Dc_\sigma^{-1}(\Vs)\right) \subseteq \alg(\Dc_\xi^{-1}(\Vs)) = \Dc_\sigma^{-1}\left(\alg_\xi\Vs\right)$. Applying $\Dc_\sigma$ on both sides of the previous equation we get:
    $\Dc_\sigma\left(\alg\left(\Dc_\sigma^{-1}(\Vs)\right)\right) \subseteq \alg_\xi\Vs,$ which by Proposition \ref{prop:distorted_algebra} is equivalent to  $\alg_\sigma\Vs\subseteq \alg_\xi\Vs,$ for any $\xi$ such that $\xi\sim\alg_\xi\Vs$. This implies that $\alg_\sigma\Vs$ is the minimal distorted algebra containing $\Vs.$
\end{proof}

It follows from the proof of the above Theorem that the choice of the operator $V$ in the statement does not affect the algebra we obtain. Moreover, while the choice above yields the optimal (minimal) distorted algebra containing $\Vs$ that admits a CPTP projection, it is possible that there exists an operator $\xi\in\Bf(\Hc)$ such that $\alg_\xi\Vs$ is smaller in dimension with respect to all $\alg_\sigma\Vs$ with $\sigma\in\zentrum(\alg\Vs)$. However, by Theorem \ref{thm:takesaki_extended} such $\alg_\xi\Vs$ would not be the image of a CPTP projection (state extension) and thus could not be used to obtain a CPTP reduced model via conditional expectations. 

\section{Reductions to QSO models}
\label{sec:reduction_algorithms}

The algorithm we propose is divided into two steps, each providing a solution to Problem \ref{prob:single_time}, but the solution provided by a single step is optimal only under certain conditions that we will discuss later. In the following subsections, we will start by discussing one step at a time and then show how to combine the two steps to obtain the full algorithm.

\begin{remark}
    In the remainder of this work, we will only focus on algebras with full support, i.e. $\Hc_R=0$. This is always possible since, if computing an algebra one finds that it does not have full support, one can always first restrict the QSO models and their analysis onto the support of the algebra itself. In fact, the set of operators over the support of the algebra is a $*$-algebra that allows for a state extension. 
A case in which this happens will be discussed in the examples section.
\end{remark}

\subsection{Projection onto the reachable algebra}

First, we need to define two objects that  play a key role in the algorithm and in the subsequent Theorem. The state $\bar{\rho}$ is defined as the weighted sum of the trajectories that generate the reachable space, i.e. 
\begin{equation}
    \bar{\rho}\equiv \frac{1}{|\Sf| n^2}\sum_{\rho_0\in\Sf}\sum_{t=0}^{n^2} \Ac^{t}[\rho_0]
    \label{eq:rho_bar}
\end{equation}
where $n=\dim(\Hc)$. Let us then denote $\zentrum_R \equiv \zentrum(\alg(\Rs))$ and $\Pi_\zentrum$ the projector onto the center $\zentrum_R$. The projection of $\bar{\rho}$ onto $\zentrum_R$  is then denoted by $\sigma\equiv \Pi_\zentrum[\bar{\rho}]$. 

We next introduce the \textit{reachable algebra}, defined as \[\Ds = \alg\left(\Dc_\sigma^{-1}(\Rs)\right).\] We shall see in Theorem \ref{thm:reachable-projection} that $\Ds$, does not depend on the state $\sigma$ as long as $\sigma$ is chosen as the projection onto the center $\zentrum$ of a full support state. 

We now present Algorithm \ref{algo:reachable}, which provides a partial solution to Problem \ref{prob:single_time}, and prove that the reachable algebra is the minimal algebra that allows the reduced model to reproduce the state dynamics for all possible observables. 

\begin{algorithm}
    \caption{Projection onto the reachable algebra.}
    \label{algo:reachable}
    \SetAlgoLined
    \Input{A QSO model $(\Bs, \Ys ,\Ac, \Cc, \Sf)$.}
    Compute $\Rs$ using equation \eqref{eq:reachable_space}\;
    Compute $\zentrum\equiv\zentrum(\alg(\Rs))$ and $\Pi_{\zentrum}$\;
    Compute $\bar{\rho}$ according to equation \eqref{eq:rho_bar} and $\sigma = \Pi_\zentrum[\bar{\rho}]$\;
    Compute the reachable algebra $\Ds \equiv \alg\left(\Dc_{\sigma}^{-1}(\Rs)\right)$\;
    Compute the factorizations of $\SE_\Ds$, $\rs$ and $\es$  using equations \eqref{eqn:injection} and \eqref{eqn:reduction}\;
    \Output{$(\check{\Ds}, \Ys, \rs \Ac \es, \Cc\es, \rs\Sf)$}
\end{algorithm}

\begin{theorem}[\bf Reduction on the reachable algebra]
\label{thm:reachable-projection}
    Consider the QSO model $(\Bs, \Ys ,\Ac, \Cc, \Sf)$. Let $\Rs$ be its reachable space and let $\sigma$ be as above.

    Then the {reachable algebra} $\Ds= \alg\left(\Dc_{\sigma}^{-1}(\Rs)\right)$ is the smallest algebra such that there exists $\SE_\Ds$ with ${\rm Im}(\SE_\Ds)=\Dc_\sigma(\Ds)\supseteq\Rs$, and such that for all $\rho_0\in\Sf$ and for all $t\geq 0$ we have,
    \begin{equation}
        \Ac^t[\rho_0] = (\SE_\Ds  \Ac\SE_\Ds)^t\SE_\Ds[\rho_0].
        \label{eqn:reachable_projection}
    \end{equation} 
\end{theorem}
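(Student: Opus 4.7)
The plan is to prove the theorem in three stages: (i) construct the state extension $\SE_\Ds$, (ii) verify the trajectory identity \eqref{eqn:reachable_projection}, and (iii) establish minimality. The main difficulty lies in stage (i), which requires carefully combining the properties of $\bar\rho$ and $\sigma$ with the machinery developed in Section \ref{sec:minimal_distorted_algebras}.

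For stage (i), I would first argue that $\bar\rho\in\Rs$, defined in \eqref{eq:rho_bar}, is positive-definite. Under the full-support assumption recalled in the remark preceding the theorem, $\supp(\Rs)=\Hc$, and since $\bar\rho$ is a convex combination of density operators whose supports jointly generate $\Rs$, it has full support. Because $\zentrum_R=\zentrum(\alg(\Rs))$ inherits the block-diagonal structure \eqref{eq:algblock} of $\alg(\Rs)$, the projection $\sigma=\Pi_{\zentrum_R}[\bar\rho]$ remains positive-definite and lies in $\zentrum(\alg(\Rs))$. Applying the Minimal Compatible Distorted Algebra Theorem with $\Vs=\Rs$ and $V=\bar\rho$ then gives that $\alg_\sigma(\Rs)$ is the minimal distorted algebra containing $\Rs$ that admits a state extension, and in particular $\sigma\comp\Ds$. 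Proposition \ref{prop:distorted_algebra} identifies $\alg_\sigma(\Rs)=\Dc_\sigma(\Ds)$, and Theorem \ref{thm:takesaki_extended} then supplies the state extension $\SE_\Ds$ with $\fix(\SE_\Ds)=\im(\SE_\Ds)=\Dc_\sigma(\Ds)$.

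For stage (ii), I would observe that $\Rs\subseteq\alg_\sigma(\Rs)=\fix(\SE_\Ds)$ together with $\Ac$-invariance of $\Rs$ implies that each $\Ac^t[\rho_0]$, with $\rho_0\in\Sf$ and $t\geq 0$, is a fixed point of $\SE_\Ds$. Identity \eqref{eqn:reachable_projection} then follows by induction on $t$: the base case is $\SE_\Ds[\rho_0]=\rho_0$, and in the inductive step one uses $\SE_\Ds\Ac^{s}[\rho_0]=\Ac^{s}[\rho_0]$ for $s=t-1,t$ to collapse $(\SE_\Ds\Ac\SE_\Ds)^t\SE_\Ds[\rho_0]$ down to $\Ac^t[\rho_0]$.

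For stage (iii), I would consider any algebra $\Ds'$ admitting a state extension $\SE_{\Ds'}$ onto $\Dc_\sigma(\Ds')$ and satisfying \eqref{eqn:reachable_projection} with $\SE_\Ds$ replaced by $\SE_{\Ds'}$. Evaluating at $t=0$ forces $\Sf\subseteq\fix(\SE_{\Ds'})=\Dc_\sigma(\Ds')$, and iterating the identity gives $\Ac^t[\rho_0]\in\Dc_\sigma(\Ds')$ for all $t\geq 0$, whence $\Rs\subseteq\Dc_\sigma(\Ds')$. Applying the invertible map $\Dc_\sigma^{-1}$ yields $\Dc_\sigma^{-1}(\Rs)\subseteq\Ds'$, and since $\Ds'$ is a $*$-algebra, $\Ds=\alg(\Dc_\sigma^{-1}(\Rs))\subseteq\Ds'$. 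This shows that $\Ds$ is the smallest algebra with the required properties, completing the proof.
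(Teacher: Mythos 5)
Your stages (i) and (ii) are correct and essentially mirror the paper's argument: the paper likewise uses $\supp(\bar\rho)=\supp(\Rs)$, takes $\sigma=\Pi_{\zentrum_R}[\bar\rho]$, invokes the minimal-compatible-distorted-algebra result to obtain $\SE_\Ds$ with $\fix(\SE_\Ds)=\Dc_\sigma(\Ds)\supseteq\Rs$, and then collapses $(\SE_\Ds\Ac\SE_\Ds)^t\SE_\Ds$ onto $\Ac^t$ using $\Ac$-invariance of $\Rs$ (phrased there via $\SE_\Ds\Pi_\Rs=\Pi_\Rs$ rather than your induction on fixed points; the two are equivalent). Your derivation of the necessary condition $\Rs\subseteq\im(\SE_{\Ds'})$ for any competitor is also a clean shortcut compared with the paper's contradiction argument.

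The one genuine issue is in stage (iii): you only compare $\Ds$ against algebras $\Ds'$ that admit a state extension onto $\Dc_\sigma(\Ds')$ \emph{for the same} $\sigma$, which is what lets you apply $\Dc_\sigma^{-1}$ to the inclusion $\Rs\subseteq\Dc_\sigma(\Ds')$ and conclude $\alg(\Dc_\sigma^{-1}(\Rs))\subseteq\Ds'$. The minimality the paper establishes (and needs, since the entire point of Section V is that different distortion states yield distorted algebras of different sizes) is over \emph{all} algebras admitting \emph{any} state extension whose image contains $\Rs$, i.e.\ over all compatible distortion states $\sigma'$, not just the given $\sigma$. As written, your argument does not exclude a smaller algebra $\Ds''$ compatible with some other state $\sigma''$. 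The repair is immediate with what you already have: for any competitor, your iteration argument shows $\Rs\subseteq\im(\SE_{\Ds'})$, and $\im(\SE_{\Ds'})$ is a distorted algebra containing $\Rs$ that admits a state extension; the Minimal Compatible Distorted Algebra theorem, which you already invoked in stage (i), then yields $\alg_\sigma(\Rs)=\Dc_\sigma(\Ds)\subseteq\im(\SE_{\Ds'})$, hence $\dim(\Ds)\le\dim(\Ds')$, which is the minimality actually claimed.
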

\begin{proof}
    We  divide the proof into three parts: \textbf{i)} In the first part we  prove that the assumption $\SE_\Ds\Pi_\Rs = \Pi_\Rs$ is sufficient for equation \eqref{eqn:reachable_projection} to hold for all $t\geq 0$ and $\rho_0\in\Sf$; \textbf{ii)} In the second part we  prove that the same condition is also necessary; \textbf{iii)} Lastly, we prove that the dynamical algebra is minimal.
    
    \textbf{i)} We  prove the first step by showing that, for all $t\geq0$ and $\rho_0\in\Sf$, both sides of  equation \eqref{eqn:reachable_projection} are equivalent to $(\Pi_\Rs\Ac\Pi_\Rs)^t \Pi_\Rs[\rho_0]$, where $\Pi_\Rs$ is the orthogonal projection onto $\Rs$. Starting from the left side of equation \eqref{eqn:reachable_projection} we can observe, from the definition of $\Rs$, that $\rho_0\in\Rs$ and hence $\rho_0 = \Pi_\Rs[\rho_0]$ for all $\rho_0\in\Rs$. Moreover, from the definition of $\Rs$ we have that $\Rs$ is $\Ac$-invariant, hence $\Ac\Pi_\Rs=\Pi_\Rs\Ac\Pi_\Rs$ and $\Ac^t \Pi_\Rs= (\Pi_\Rs\Ac\Pi_\Rs)^t$. Combining these two observations proves the first equality: \(\Ac^t[\rho_0] = \Ac^t\Pi_\Rs[\rho_0] = (\Pi_\Rs\Ac\Pi_\Rs)^{t}\Pi_\Rs[\rho_0]\)

    Assume now that $\SE_\Ds\Pi_{\Rs} = \Pi_\Rs$. Observing that $\SE_\Ds\Ac\SE_\Ds\Pi_\Rs = \SE_\Ds\Ac\Pi_\Rs = \SE_\Ds\Pi_\Rs\Ac\Pi_\Rs = \Pi_\Rs\Ac\Pi_\Rs$ we obtain $(\SE_\Ds  \Ac\SE_\Ds)^t\Pi_\Rs = (\Pi_\Rs  \Ac{\Pi}_\Rs)^t\Pi_\Rs $. Through similar calculations, we then obtain the second equality \((\SE_\Ds  \Ac\SE_\Ds)^t\SE_\Ds[\rho_0] = (\SE_\Ds  \Ac\SE_\Ds)^t\SE_\Ds\Pi_\Rs[\rho_0] = (\SE_\Ds  \Ac\SE_\Ds)^t\Pi_\Rs[\rho_0] = (\Pi_\Rs  \Ac{\Pi}_\Rs)^t\Pi_\Rs[\rho_0]\).

    \textbf{ii)} Up to this point, we proved that the condition $\SE_\Ds\Pi_\Rs = \Pi_\Rs$ is sufficient for equation \eqref{eqn:reachable_projection} to hold at any time and any initial condition in $\Sf$. However, it is also possible to prove that such a condition is also necessary. Assume that $\SE_\Ds$ is a projector onto a subspace such that equation \eqref{eqn:reachable_projection} holds for all $t\geq0$ and $\rho_0\in\Sf$. First of all, we can observe that requiring that equation \eqref{eqn:reachable_projection} holds for all $t\geq 0$ and $\rho_0\in\Sf$ is equivalent to requiring the same for all $\rho_0\in\Rs$. From the definition of $\Rs$ we have that $\Rs\supseteq\Sf$ and hence it comes trivially that if equation \eqref{eqn:reachable_projection} holds for all $\rho_0\in\Rs$ it holds in particular for $\rho_0\in\Sf$. To prove the opposite implication we can proceed by contradiction. Assume that equation \eqref{eqn:reachable_projection} holds for all $\rho_0\in\Sf$ but there exist $\tau\in\Rs$ and $t\geq0$ such that $\Ac^t[\tau] \neq (\SE_\Ds  \Ac\SE_\Ds)^t\SE_\Ds[\tau]$. Since $\tau\in\Rs$, there exists a set $\{\lambda_{k,\rho_0}\}$ such that $\tau = \sum_k \sum_{\rho_0\in\Sf} \lambda_{k,\rho_0}\Ac^k[\rho_0]$. But then we have $\Ac^t[\tau] = \sum_k \sum_{\rho_0\in\Sf} \lambda_{k,\rho_0}\Ac^{t+k}[\rho_0]$ and, on the other hand 
    \begin{align*}
        &(\SE_\Ds  \Ac\SE_\Ds)^t\SE_\Ds[\tau]=\\ 
        &\qquad= \sum_k \sum_{\rho_0\in\Sf} \lambda_{k,\rho_0} (\SE_\Ds  \Ac\SE_\Ds)^t\SE_\Ds  \underbrace{\Ac^k[\rho_0]}_{(\SE_\Ds  \Ac\SE_\Ds)^k \SE_\Ds[\rho_0]}\\ 
        &\qquad = \sum_k \sum_{\rho_0\in\Sf} \lambda_{k,\rho_0} \underbrace{(\SE_\Ds  \Ac\SE_\Ds)^{t+k} \SE_\Ds[\rho_0]}_{\Ac^{t+k}[\rho_0]}
    \end{align*}
    which leads to an absurd. 
    
    We can then study equation \eqref{eqn:reachable_projection} at $t=0$ for all $\rho_0\in\Rs$, obtaining $(\Ic - \SE_\Ds)[\rho_0] = 0$ for all $\rho_0\in\Rs$. Equivalently we can write $(\Ic - \SE_\Ds)\Pi_\Rs[\rho_0] = (\Pi_\Rs - \SE_\Ds\Pi_\Rs)[\rho_0] =0$ for all $\rho_0\in\Rs$, which is true if and only if $\Pi_\Rs - \SE_\Ds\Pi_\Rs = 0$. This proves that a necessary condition for $\Ac^t[\rho_0] = (\SE_\Ds\Ac\SE_\Ds)^t\SE_\Ds[\rho_0]$ to hold is that $\SE_\Ds\Pi_\Rs = \Pi_\Rs$, or, in other words, $\SE_\Ds$ needs to act as the identity over the reachable space, i.e. $\fix(\SE_\Ds)\supseteq \Rs$. 

    \textbf{iii)} Finally, let us notice that, given $\bar{\rho}$ as defined in equation \eqref{eq:rho_bar} we have that $\supp(\bar{\rho}) = \supp(\Rs)$. Then, from Theorem \ref{thm:minimaldistorted} with $\sigma= \Pi_\zentrum [\bar{\rho}]$, we know that $\Ds_{\sigma} \equiv \Dc_\sigma(\Ds)$, is the smallest distorted algebra that contains $\Rs$ and that allows for a CPTP state extension onto it.  
    Moreover, we have that $\SE_\Ds$ acts as the identity on $\Ds_\sigma$ and, since $\Ds_\sigma\supseteq\Rs$, $\SE_\Ds$ acts as the identity on $\Rs$ as well, hence $\SE_\Ds\Pi_{\Rs} = \Pi_\Rs$. From the previous two points then we have that equation \eqref{eqn:reachable_projection} holds for all $t\geq0$ and $\rho_0\in\Sf$ hence the outputs are also equivalent, i.e. $\Cc\Ac^t[\rho_0] = \Cc\SE_\Ds (\SE_\Ds\Ac\SE_\Ds)^t \SE_\Ds[\rho_0]$ for all $t\geq0$ and $\rho_0\in\Sf$ and thus the reduced model solves Problem \ref{prob:single_time}. 
\end{proof}

To show that Algorithm \ref{algo:reachable} provides a suitable QSO model that solves Problem \ref{prob:single_time} it is then sufficient to combine the results of Theorem \ref{thm:reachable-projection} and Proposition \ref{prop:reduction}.
We thus proved that the reduced model is a valid QSO model and that it satisfies stronger conditions than the one required by Problem \eqref{prob:single_time}, i.e. the reduced model provided by Algorithm \ref{algo:reachable} reproduces the full state dynamics (not only the output one), starting from any state in $\Rs$ (not just $\Sf$). 

On the other hand, notice that the reduced model is not in general guaranteed to reproduce  the correct state dynamics or output starting from {\em any} initial condition $\rho_0\in\Dc_\sigma(\Ds)$. This is because the extension of the reachable state to an algebra might include some non-reachable densities.

\begin{remark}
    We would like to stress that the  dimensional reduction of the matrix representations one obtains from the procedure is twofold: The first comes from the deletion of the off-diagonal blocks in the Wedderburn decomposition \eqref{eq:algblock}, while the second comes from the removal of the repeated blocks (the identity factors). This implies that an effective reduction is achieved also when the reachable algebra $\Ds$ does not present repeated blocks.    
\end{remark}

\subsection{Projection onto the observable algebra}
In this subsection the role previously taken by the reachable algebra is taken by an algebra determined by the observables of interest, which we name \textit{observable algebra} and define as $\Os \equiv \alg(\Ns^\perp)$. We are now ready to present Algorithm \ref{algo:observable}, which also provides a solution to Problem \ref{prob:single_time}. 

\begin{algorithm}
    \caption{Projection onto the observable algebra.}
    \label{algo:observable}
    \SetAlgoLined
    \Input{A QSO model $(\Bs, \Ys ,\Ac, \Cc, \Sf)$.}
    Compute $\Ns$ using equation \eqref{eq:non_observable_subspace}\;
    Compute the observable algebra $\Os=\alg(\Ns^\perp)$\;
    Pick $\sigma$ as any state compatible with $\Os$, e.g. $\sigma = I/n$\;
    Compute the factorizations of $\SE_\Os$, $\rs$ and $\es$  using equations \eqref{eqn:injection} and \eqref{eqn:reduction}\;
    \Output{$(\Check{\Os}, \Ys, \rs\Ac\es, \Cc\es, \rs\Sf)$}
\end{algorithm}

We next prove that Algorithm \ref{algo:observable} not only provides a suitable QSO model that solves Problem \ref{prob:single_time}, but also  that the observable algebra is the smallest algebra that allows the reduced model to reproduce the output dynamics for any initial condition. 

The proof of this theorem is more involved than the proof of Theorem \ref{thm:reachable-projection}, because in this case we have the freedom to choose over the notion of orthogonality with respect to which we construct the complement of $\Ns$. Nonetheless, we are able to show that the natural orthogonality notion provides the optimal solution.

\begin{theorem}[\bf Reduction on the observable algebra]
\label{thm:non_observable-projection}
    Let $(\Bs,\Ys, \Ac, \Cc, \Sf )$ be a QSO model and consider  its non-observable subspace $\Ns.$ The {\em observable algebra} $\Os = \alg(\Ns^\perp)$ is the smallest algebra that allows for a state extension $\SE_\Os$ such that for all $t\geq0$, and for all $\rho_0\in\Bf(\Hc)$ we have 
    \begin{equation}
        \Cc\Ac^t[\rho_0] = \Cc\SE_\Os(\SE_\Os  \Ac\SE_\Os)^t[\rho_0].
        \label{eqn:non_observable_projection}
    \end{equation}  
\end{theorem}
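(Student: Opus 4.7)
The plan is to mirror the three-part structure of Theorem~\ref{thm:reachable-projection}, but on the dual/observable side: the role played by the reachable subspace $\Rs$ and the condition $\SE_\Ds\Pi_\Rs=\Pi_\Rs$ is now taken by $\Ns^\perp$ and the condition that every generator $C_i$ of $\Cc^\dag$ lies in $\fix(\CE_\Os)$. Writing $\Cc[X]=\sum_i E_i\tr(C_i^\dag X)$ with $\{E_i\}$ an orthogonal basis of $\Ys$, recall from Section~\ref{sec:non-observable_subspace} that $\Ns^\perp=\Span\{\Ac^{\dag t}[C_i],\, t=0,\ldots,n^2-1\}$, so in particular $\Ns^\perp$ is $\Ac^\dag$-invariant by construction. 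Using $\SE_\Os^\dag=\CE_\Os$ and the idempotence $\CE_\Os^2=\CE_\Os$, equation~\eqref{eqn:non_observable_projection} is equivalent, by letting $\rho_0$ range over $\Bf(\Hc)$ and dualising through the Hilbert--Schmidt pairing, to
\[ \Ac^{\dag t}[C_i]=(\CE_\Os\Ac^\dag)^t\CE_\Os[C_i]\qquad\forall i,\,t\geq 0. \]

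For sufficiency I will show that any $*$-algebra $\Bs\supseteq\Ns^\perp$ admitting a conditional expectation $\CE_\Bs$ satisfies the displayed identity (with $\Os$ replaced by $\Bs$). Since each $C_i\in\Ns^\perp\subseteq\Bs=\fix(\CE_\Bs)$, one has $\CE_\Bs[C_i]=C_i$; then $\Ac^\dag$-invariance of $\Ns^\perp$ gives $\Ac^{\dag s}[C_i]\in\Ns^\perp\subseteq\Bs$ for every $s\geq 0$, so $\CE_\Bs$ acts as the identity along the entire orbit and a short induction on $t$ closes the argument. For necessity, assume $\SE_\Bs$ achieves~\eqref{eqn:non_observable_projection} for every $\rho_0\in\Bf(\Hc)$. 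Evaluating at $t=0$ gives $\Cc=\Cc\SE_\Bs$; dualising and using the arbitrariness of $\rho_0$ yields $C_i=\CE_\Bs[C_i]$, hence $C_i\in\Bs$. Evaluating at $t=1$ and using the $t=0$ conclusion forces $\Ac^\dag[C_i]\in\Bs$; iterating gives $\Ac^{\dag t}[C_i]\in\Bs$ for every $t\geq 0$. Therefore $\Ns^\perp\subseteq\Bs$, and since $\Bs$ is a $*$-algebra, $\Os=\alg(\Ns^\perp)\subseteq\Bs$.

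To close the argument I still need $\Os$ itself to admit a CPTP state extension. Taking $\sigma=I/n$, which lies in $\zentrum(\Os)$ (the centre of any unital $*$-algebra contains the multiples of the identity), compatibility of $\sigma$ with $\Os$ is immediate in the sense of the definition preceding Theorem~\ref{thm:takesaki_extended}; by that theorem a state extension with image $\Dc_\sigma(\Os)=\Os$ exists, and its dual coincides with the Hilbert--Schmidt orthogonal projection onto $\Os$. Combined with the necessity step this yields minimality. The main delicate point is the one flagged in the remark preceding the statement: a modified inner product $\inner{\cdot}{\cdot}_\sigma$ would yield a different orthogonal complement of $\Ns$ in $\As$, and therefore, a priori, a different candidate algebra. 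What keeps the argument tight is that the operators $C_i$ are defined intrinsically through the Hilbert--Schmidt representation of $\Cc$, so no matter which $\sigma$ one uses to construct $\SE_\Bs$, the identity $\Cc\SE_\Bs=\Cc$ still forces those specific $C_i$ into $\Bs$, and hence forces $\Os\subseteq\Bs$.
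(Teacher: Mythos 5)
Your proof is correct, and it takes a genuinely more streamlined route than the paper's. The paper's argument runs through the machinery of modified inner products $\inner{\cdot}{\cdot}_\Sc$, the associated $\Sc$-orthogonal complements $\Ws_\Sc$ of $\Ns$, and the condition $\Hat{\Pi}_{\Ws_\Sc}\SE_\Os=\Hat{\Pi}_{\Ws_\Sc}$, which it then shows (in its part \textbf{iii}) to be equivalent to $\fix(\CE_\Os)\supseteq\Ns^\perp$ independently of $\Sc$; along the way (part \textbf{ii}) it also proves that \eqref{eqn:non_observable_projection} for $\Cc$ forces the same identity for every output map $\Kc$ with $\ker\Kc\supseteq\Ns$. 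You bypass all of that by dualising \eqref{eqn:non_observable_projection} directly through the Hilbert--Schmidt pairing into the operator identity $\Ac^{\dag t}[C_i]=(\CE_\Os\Ac^{\dag})^t\CE_\Os[C_i]$ (the paper uses exactly this dualisation, but only as one bullet inside its contradiction argument), and then both sufficiency (via $\Ac^\dag$-invariance of $\Ns^\perp$ and induction on $t$) and necessity (peeling off $C_i\in\Bs$, then $\Ac^\dag[C_i]\in\Bs$, etc.) follow cleanly; your closing observation that the $C_i$ are intrinsic to $\Cc$ is precisely why the choice of complement of $\Ns$ is irrelevant and the extra apparatus can be dropped. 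What the paper's longer route buys is the explicit by-product, recorded in the remark after the theorem, that the reduced model reproduces the output of \emph{any} $\Kc$ with $\ker\Kc\supseteq\Ns$ — your proof yields this too, but implicitly. Two minor points you share with the paper rather than resolve: the appeal to Theorem~\ref{thm:takesaki_extended} for the existence of $\SE_\Os$ tacitly assumes $\Os$ is unital (equivalently, has full support), which is covered by the remark at the start of Section~\ref{sec:reduction_algorithms}; and one should note $(\CE_\Os\Ac^\dag\CE_\Os)^t\CE_\Os=(\CE_\Os\Ac^\dag)^t\CE_\Os$ by idempotence, which you use without comment but which is immediate.
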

\begin{proof}
     Recall that any choice of modified inner product $\inner{\cdot}{\cdot}_\Sc$ induces an orthogonal  complement of $\Ns$, denoted with $\Ws_\Sc$, i.e. $\Bf(\Hc) = \Ns\bigoplus_\Sc \Ws_\Sc$, and a projection $\Hat{\Pi}_{\Ws_\Sc}$  onto $\Ws_\Sc$, which is  $\Sc$-orthogonal. The latter is such that $\Hat{\Pi}_{\Ws_\Sc}+\Hat{\Pi}_\Ns = \Ic$, the identity super operator, where $\Hat{\Pi}_\Ns$ is the $\Sc$-orthogonal projection onto $\Ns$.

    The proof is divided into three parts: \textbf{i)} In the first part we  prove that a sufficient condition for equation \eqref{eqn:non_observable_projection} to hold is that there exists $\inner{\cdot}{\cdot}_{\Sc}$ such that $\Hat{\Pi}_{\Ws_\Sc}\SE_\Os = \Hat{\Pi}_{\Ws_\Sc}$; \textbf{ii)} In the second part we  prove that the same condition is also necessary; 
    \textbf{iii)} Lastly, we prove that the condition $\Hat{\Pi}_{\Ws_\Sc}\SE_\Os = \Hat{\Pi}_{\Ws_\Sc}$ is actually equivalent to $\fix(\CE_\Os)\supseteq \Ns^\perp$, for all choices of $\inner{\cdot}{\cdot}_\Sc$ and that $\Os$ is minimal. 
    
    \textbf{i)} Let us start by proving that, for all $t\geq0$ and $\rho_0\in\Bf(\Hc)$, for any choice of $\inner{\cdot}{\cdot}_\Sc$, we have $\Cc\Ac^t[\rho_0] = \Cc\Hat{\Pi}_{{\Ws_\Sc}} (\Hat{\Pi}_{{\Ws_\Sc}}\Ac\Hat{\Pi}_{{\Ws_\Sc}})^t \Hat{\Pi}_{{\Ws_\Sc}}[\rho_0]$. From the definition of $\Ns$ we have that $\Ns\subseteq\ker\Cc$, hence $$\Cc = \Cc(\Hat{\Pi}_{\Ws_\Sc} + \Hat{\Pi}_\Ns) = \Cc\Hat{\Pi}_{\Ws_\Sc} + \cancel{\Cc\Hat{\Pi}_\Ns} =\Cc\Hat{\Pi}_{\Ws_\Sc}.$$ Moreover, $\Ns$ is $\Ac$-invariant, i.e. $\Ac\Hat{\Pi}_\Ns = \Hat{\Pi}_\Ns\Ac\Hat{\Pi}_\Ns$, hence, $$\Hat{\Pi}_{\Ws_\Sc}\Ac 
    = \Hat{\Pi}_{\Ws_\Sc} \Ac \Hat{\Pi}_{\Ws_\Sc} + \cancel{\Hat{\Pi}_{\Ws_\Sc} \Hat{\Pi}_\Ns}\Ac \Hat{\Pi}_\Ns 
    = \Hat{\Pi}_{\Ws_\Sc} \Ac \Hat{\Pi}_{\Ws_\Sc} $$ and similarly, by iterating the calculation above $$\Hat{\Pi}_{\Ws_\Sc}\Ac^t 
    = \Hat{\Pi}_{\Ws_\Sc}\Ac\Hat{\Pi}_{\Ws_\Sc} \Ac^{t-1} =\Hat{\Pi}_{\Ws_\Sc} (\Hat{\Pi}_{\Ws_\Sc} \Ac \Hat{\Pi}_{\Ws_\Sc})^t.$$ From these, we obtain: \(\Cc\Ac^t[\rho_0] =  \Cc\Hat{\Pi}_{\Ws_\Sc}\Ac^{t}[\rho_0] = \Cc\Hat{\Pi}_{\Ws_\Sc}(\Hat{\Pi}_{\Ws_\Sc}\Ac\Hat{\Pi}_{\Ws_\Sc})^t[\rho_0]\).
    
    Assume than that there exists $\inner{\cdot}{\cdot}_\Sc$ such that $\Hat{\Pi}_{\Ws_\Sc}\SE_\Os = \Hat{\Pi}_{\Ws_\Sc}$. Then, through similar calculations, we can notice that $\Hat{\Pi}_{\Ws_\Sc}\SE_\Os\Ac\SE_\Os = \Hat{\Pi}_{\Ws_\Sc}\Ac(\Hat{\Pi}_{\Ws_\Sc}+\Hat{\Pi}_\Ns)\SE_\Os  = \Hat{\Pi}_{\Ws_\Sc}\Ac\Hat{\Pi}_{\Ws_\Sc}\SE_\Os + \Hat{\Pi}_{\Ws_\Sc}\Ac\Hat{\Pi}_\Ns\SE_\Os = \Hat{\Pi}_{\Ws_\Sc}\Ac\Hat{\Pi}_{\Ws_\Sc} + \cancel{\Hat{\Pi}_{\Ws_\Sc}\Hat{\Pi}_\Ns}\Ac\Hat{\Pi}_\Ns\SE_\Os = \Hat{\Pi}_{\Ws_\Sc}\Ac\Hat{\Pi}_{\Ws_\Sc}$ and iterating for multiple steps we obtain
    \begin{align*}
         \Hat{\Pi}_{\Ws_\Sc} (\SE_\Os  \Ac\SE_\Os)^{t} 
         &=\Hat{\Pi}_{\Ws_\Sc} \SE_\Os \Ac \SE_\Os (\SE_\Os  \Ac\SE_\Os)^{t-1}\\
         &=\Hat{\Pi}_{\Ws_\Sc} \Ac \Hat{\Pi}_{\Ws_\Sc} (\SE_\Os  \Ac\SE_\Os)^{t-1}\\
         &= \Hat{\Pi}_{\Ws_\Sc}(\Hat{\Pi}_{\Ws_\Sc} \Ac \Hat{\Pi}_{\Ws_\Sc})^t.
     \end{align*}
     It thus follows that:
     \begin{align*}
         \Cc\SE_\Os(\SE_\Os  \Ac\SE_\Os)^t\SE_\Os[\rho_0]&= 
         \Cc \Hat{\Pi}_{\Ws_\Sc}\SE_\Os(\SE_\Os  \Ac\SE_\Os)^t\SE_\Os[\rho_0]\\
         &= \Cc \Hat{\Pi}_{\Ws_\Sc} (\SE_\Os  \Ac\SE_\Os)^{t}\SE_\Os[\rho_0]\\
         &= \Cc \Hat{\Pi}_{\Ws_\Sc}(\Hat{\Pi}_{\Ws_\Sc} \Ac \Hat{\Pi}_{\Ws_\Sc})^t[\rho_0]
     \end{align*}
     for all $t\geq 0$ and $\rho_0\in\Bf(\Hc)$, concluding the first part of the proof.

     \textbf{ii)} We now want to prove that if equation \eqref{eqn:non_observable_projection} holds for all $t\geq0$ and $\rho_0\in\Bf(\Hc)$ then $\Hat{\Pi}_{\Ws_\Sc}\SE_\Os = \Hat{\Pi}_{\Ws_\Sc}$.   We  start by proving that requiring equation \eqref{eqn:non_observable_projection} to hold for all $\rho_0\in\Bf(\Hc)$ and $t\geq0$ is equivalent to require that $\Kc\Ac^t[\rho_0] = \Kc\SE_\Os(\SE_\Os\Ac\SE_\Os)^t[\rho_0]$ holds for all $\rho_0\in\Bf(\Hc)$ and $t\geq0$ and for all output maps $\Kc:\Ac\to\Ys$ such that $\ker\Kc\supseteq\Ns$. The fact that this  implies equation \eqref{eqn:non_observable_projection} comes from the fact that $\Cc$ is one particular output map that satisfies $\ker\Kc\supseteq\Ns$. To prove the opposite, we  proceed by contradiction. Assume that $\Cc\Ac^t[\rho_0] = \Cc\SE_\Os(\SE_\Os  \Ac\SE_\Os)^t[\rho_0]$ holds for all $\rho_0\in\Bf(\Hc)$ and $t\geq0$ but there exist $t\geq0$, $\rho_0\in\Bf(\Hc)$ and $\Kc$ such that $\ker\Kc\supseteq\Ns$ for which $\Kc\Ac^t[\rho_0] \neq \Kc\SE_\Os(\SE_\Os\Ac\SE_\Os)^t[\rho_0]$. Let us recall that the output map $\Cc$ can be written as $\Cc(\cdot) = \sum_i E_i \inner{C_i}{\cdot}_{HS}$ for a set $\{C_i\}\subseteq\As$ and for an orthonormal basis $\{E_i\}\subseteq\Ys$. Similarly, we can write $\Kc(\cdot)=\sum_i E_i \inner{K_i}{\cdot}_{HS}$.  We can then notice:
     \begin{itemize}
        \item $\Ns^\perp = \Span\{\Ac^{\dag t}[C_i], \, t\geq0\}$; 
         \item Necessary and sufficient condition for $\ker\Kc\supseteq\Ns$to hold is that $\Span\{K_i\}\subseteq\Ns^\perp$;
         \item For simplicity, we can assume $\Kc$ to be of the form $\Kc(\cdot) = E\inner{K}{\cdot}_{HS}$ where $E\in\Ys$ and $K\in\Ns^\perp$.
         \item $K\in\Ns^\perp$ implies that there exist a set of coefficients $\{\lambda_{i,k}\}$ such that $K = \sum_{i,k}\lambda_{i,k} \Ac^{\dag t}[C_i]$;
         \item The fact that equation \eqref{eqn:non_observable_projection} holds implies that $\inner{C_i}{\Ac^t[X]}_{HS} = \inner{C_i}{\SE_\Os(\SE_\Os\Ac\SE_\Os)^t[X]}_{HS}$ or, in other words $\inner{\Ac^{\dag t}[C_i]}{X}_{HS} = \inner{(\CE_\Os\Ac^\dag\CE_\Os)^t\CE_\Os[C_i]}{X}_{HS}$, for all $C_i$, for all $t\geq0$ and $X\in\Bf(\Hc)$.
     \end{itemize}
     With these observations, we then have that $\Kc\Ac^t[\rho_0] = E\inner{K}{\Ac^t[\rho_0]} = \sum_{i,k}\lambda_{i,k}E\inner{\Ac^{\dag k}[C_i]}{\Ac^t[\rho_0]} = \sum_{i,k}\lambda_{i,k}E\inner{C_i}{\Ac^{t+k}[\rho_0]}$ and, on the other hand
     \begin{align*}
         &\Kc\SE_\Os(\SE_\Os\Ac\SE_\Os)^t[\rho_0] =\\
         &\qquad= E\inner{K}{\SE_\Os(\SE_\Os\Ac\SE_\Os)^t[\rho_0]}_{HS}\\
         &\qquad= \sum_{i,k}\lambda_{i,k}E\inner{\underbrace{\Ac^{\dag k}[C_i]}_{(\CE_\Os\Ac^\dag\CE_\Os)^k\CE_\Os[C_i]}}{\SE_\Os(\SE_\Os\Ac\SE_\Os)^t[\rho_0]}\\ 
         &\qquad= \sum_{i,k}\lambda_{i,k}E\inner{\underbrace{(\CE_\Os\Ac^\dag\CE_\Os)^{t+k}\CE_\Os[C_i]}_{\Ac^{\dag t+k}[C_i]}}{\rho_0}_{HS}\\
         &\qquad= \sum_{i,k}\lambda_{i,k}E\inner{C_i}{\Ac^{t+k}[\rho_0]}_{HS}.
     \end{align*}
     Comparing the two equations we clearly have that $\Kc\Ac^t[\rho_0]=\Kc\SE_\Os(\SE_\Os\Ac\SE_\Os)^t[\rho_0]$
     which is a contradiction.

    We can thus proceed to prove that if $\Kc\Ac^t[\rho_0] = \Kc\SE_\Os(\SE_\Os\Ac\SE_\Os)^t[\rho_0]$ holds for all $\rho_0\in\Bf(\Hc)$ and $t\geq0$ and for all output maps $\Kc:\Ac\to\Ys$ such that $\ker\Kc\supseteq\Ns$ then it implies that $\Hat{\Pi}_{\Ws_\Sc}\SE_\Os = \Hat{\Pi}_{\Ws_\Sc}$. Since if $\Kc\Ac^t[\rho_0] = \Kc\SE_\Os(\SE_\Os\Ac\SE_\Os)^t[\rho_0]$ holds for any $\Kc$ such that $\ker{\Kc}\supseteq\Ns$ then, in particular it holds for $\Kc=\Hat{\Pi}_{\Ws_\Sc}$, for any choice of $\inner{\cdot}{\cdot}_\Sc$. But then for $t=0$ we have $\Hat{\Pi}_{\Ws_\Sc}[\rho_0] = \Hat{\Pi}_{\Ws_\Sc}\SE_\Os[\rho_0]$ for all $\rho_0\in\Bf(\Hc)$, which holds true if and only if $\Hat{\Pi}_{\Ws_\Sc}\SE_\Os = \Hat{\Pi}_{\Ws_\Sc}$, concluding the second step of the proof 
     
    \textbf{iii)} We  now prove that $\Hat{\Pi}_{\Ws_\Sc}\SE_\Os = \Hat{\Pi}_{\Ws_\Sc}$ is equivalent to $\fix(\CE_\Os)\supseteq \Ns^\perp$ for any choice of $\inner{\cdot}{\cdot}_\Sc$. Applying the adjoint operation on both sides of $\Hat{\Pi}_{\Ws_\Sc}\SE_\Os = \Hat{\Pi}_{\Ws_\Sc}$, we can obtain $\CE_\Os \Hat{\Pi}_{\Ws_\Sc}^\dag = \Hat{\Pi}_{\Ws_\Sc}^\dag$, where we recall that $\SE_\Os^\dag = \CE_\Os$. From the orthogonality of $\Hat{\Pi}_{\Ws_\Sc}$ with respect to $\inner{\cdot}{\cdot}_{\Sc}$, we have that $\Hat{\Pi}_{\Ws_\Sc}^\dag = \Sc\Hat{\Pi}_{\Ws_\Sc}\Sc^{-1}$. Substituting this equality into $\CE_\Os \Hat{\Pi}_{\Ws_\Sc}^\dag = \Hat{\Pi}_{\Ws_\Sc}^\dag$ we obtain $\CE_\Os \Sc\Hat{\Pi}_{\Ws_\Sc}\Sc^{-1} = \Sc\Hat{\Pi}_{\Ws_\Sc}\Sc^{-1}$. Right-applying $\Sc$ and left-applying $\Sc^{-1}$ to both sides of the equation we obtain $\Sc^{-1}\CE_\Os \Sc \Hat{\Pi}_{\Ws_\Sc} = \Hat{\Pi}_{\Ws_\Sc}$, which is equivalent to say that the super-operator $\Sc^{-1}\CE_\Os \Sc$ acts as the identity over ${\Ws_\Sc}$. Observe now that: a) the super-operator $\Sc^{-1}\CE_\Os \Sc$ is a projector; b) as a projector $\Sc^{-1}\CE_\Os \Sc$ acts as the identity over its image, hence its image is equal to the set of its fixed points; c) recalling that $\fix(\CE_\Os)=\Os$ we have $\fix(\Sc^{-1}\CE_\Os \Sc) = \Sc^{-1} \Os$. With this observations we can conclude that $\Sc^{-1}\CE_\Os \Sc$ acts as the identity over ${\Ws_\Sc}$ if and only if ${\Ws_\Sc}$ is contained in $\fix(\Sc^{-1}\CE_\Os \Sc)$, that is  $\Sc^{-1}\Os \supseteq {\Ws_\Sc}$ or, equivalently $\Os\supseteq\Sc{\Ws_\Sc}$. We shall now notice that, for all $\inner{\cdot}{\cdot}_\Sc$, we have ${\Ws_\Sc} = \Sc^{-1}\Ns^\perp$, since $X\in{\Ws_\Sc}$ if $\inner{Y}{X}_\Sc=0$ for all $Y\in\Ns$ and since $\inner{Y}{X}_\Sc=\inner{Y}{\Sc(X)}_{HS}$, by defining $Z = \Sc(X)$ we have that $Z\in\Ns^\perp$ and hence $X\in\Sc^{-1}\Ns^\perp$. Including this relation into $\Os\supseteq\Sc{\Ws_\Sc}$, leads to $\Os\supseteq\Ns^\perp$ or, in other words $\fix(\CE_\Os)\supseteq \Ns^\perp$ . 

     In order for $\SE_\Os$ to be a CPTP projection, $\Os$ needs to be an algebra. In addition, for equation \eqref{eqn:non_observable_projection} to hold, as we showed in {\bf iii)}, $\Os$ must contain $\Ns^\perp.$ Hence,  the smallest algebra that contains $\Ns^\perp$ and for which \eqref{eqn:non_observable_projection} holds is $ \Os = \alg(\Ns^\perp).$ 
\end{proof}

To prove that the reduced model generated by Algorithm \ref{algo:observable} is a solution of our problem is then sufficient to combine the results of Theorem \ref{thm:non_observable-projection} and Proposition \ref{prop:reduction}.  

Notice that any $\SE_\Os$ as in the Theorem statement projects onto a $\Dc_\sigma(\Os),$ parametrized by  a choice of $\sigma\comp\Os$.
Notice also that in the proof of Theorem \ref{thm:non_observable-projection} we show that \eqref{eqn:non_observable_projection} holds for every initial condition $\rho_0\in\Bf(\Hc),$ and not only for the given output map, but also for any $\Kc:\Bs\to\Ys$ such that $\ker\Kc\supseteq\Ns$.
Finally note that $\Os$ need not to be $\Ac^\dag$-invariant for the previous result to hold. In fact, we only need the algebra $\Os$ to contain $\Ns^\perp$, which is itself $\Ac^\dag$-invariant. This implies that if we were to close $\Ns^\perp$ to an $\Ac$-invariant algebra we would obtain a valid QSO model that correctly reproduces the outputs, but it would not necessarily be minimal as such an algebra would certainly contain $\Os$. 

\subsection{Composed reductions}

To solve Problem \ref{prob:single_time}, we apply Algorithms \ref{algo:reachable} and \ref{algo:observable} iteratively, as described in the following Algorithm \ref{algo:composed}. 

\begin{algorithm}
    \caption{Iterative reduction.}
    \label{algo:composed}
    \SetAlgoLined
    \Input{A QSO model $(\Bs, \Ys ,\Ac, \Cc, \Sf)$.}
    Assign $(\As_{0}, \Ys ,\Ac_{0}, \Cc_{0}, \Sf_{0}) = (\Bs, \Ys ,\Ac, \Cc, \Sf)$\;
    Using Algorithm \ref{algo:reachable} on model $(\As_{0}, \Ys ,\Ac_{0}, \Cc_{0}, \Sf_{0})$ compute the model $(\As_1,\Ys, \Ac_1, \Cc_1, \Sf_1)$\;
    Using Algorithm \ref{algo:observable} on model $(\As_{1}, \Ys, \Ac_{1}, \Cc_{1}, \Sf_{1})$ compute the model $(\As_2,\Ys, \Ac_2, \Cc_2, \Sf_2)$\;
    \If{$\dim(\As_0) \neq \dim(\As_2)$}{
    Assign $(\As_{0}, \Ys ,\Ac_{0}, \Cc_{0}, \Sf_{0}) = (\As_2, \Ys ,\Ac_2, \Cc_2, \Sf_2)$\; 
    Go back to step 2\;}
    \Output{$(\As_2, \Ys, \Ac_2, \Cc_2, \Sf_2)$}
\end{algorithm}
Notice that one could obtain an alternative  model-reduction algorithm by applying Algorithm \ref{algo:observable} before Algorithm \ref{algo:reachable} at each iteration. The results of the two reductions are potentially different. In the various numerical tests we ran, we found that applying Algorithm \ref{algo:composed} or one where the order of Algorithm \ref{algo:observable} and Algorithm \ref{algo:reachable} is inverted led to different algebras of the same dimensions.

At each step, Algorithms \ref{algo:reachable} and \ref{algo:observable} provide either a smaller or equivalent model then the previous one. If at some iteration they both do not decrease the dimension, that iteration leaves the model untouched and there is no need to proceed further and we obtain the optimal reduction for our approach. Since we work in a finite-dimensional setting, Algorithm \ref{algo:composed} must converge to its minimum reduction in a finite number of steps.
 
It is worth remarking that, while inspired by the classical algorithm to construct minimal realizations by Rosenbrock \cite{rosenbrock1970state}, the proposed approach needs to expand the reachable and observable subspaces to {\em algebras}: in doing so we need potentially to include non-reachable and non-observable parts of the state space. For this reason, in our setting a single iteration of the algorithm is not sufficient in general to obtain the best reduction. This is demonstrated in the next example.

\begin{example}\label{ex3}
    Consider a QSO model on two qubits i.e. $\Hc\simeq\Cb^{4}$, with $\Sf = \{(I/2+\sigma_x/4)\otimes\tau, (I/2+\sigma_y/4)\otimes\tau\}$ for some full-rank density operator $\tau\in\Cb^{2\times 2}$. Assume that $\Ac=\Ic_{\Cb^{4\times 4}}$, the identity over $\Cb^{4\times 4}$ and that $\Cc(\rho) = \tr(\sigma_z\otimes \sigma_z \rho)$. 

    Computing the reachable space we have $\Rs = \Span\{\Sf\}$ and, picking $\sigma = I\otimes \tau$ we have $\alg(\Dc_\sigma^{-1}(\Rs)) = \Cb^{2 \times 2}\otimes I$. This algebra yields the reduced model $(\Ac',\Cc',\Sf')$, defined on $\Cb^{2\times 2}$ with $\Sf' = \{(I/2+\sigma_x/4), (I/2+\sigma_y/4)\}$ and  $\Ac' = \Ic_{\Cb^{2\times 2}}$ and $\Cc'(\cdot) = \tr(\sigma_z \cdot)\tr(\sigma_z\tau)$. Computing the subspace orthogonal to the non-observable subspace of the reduced model we obtain $\Ns^\perp = \Span\{\sigma_z\}$ hence $\Os = \alg(\Ns^\perp) = \Span\{I, \sigma_z\}$. This algebra then leads to the reduced QSO model $(\Ac'',\Cc'',\Sf'')$ defined on $\Cb^{2\times 2}$ with $\Sf'' = \{I/2\}$, $\Ac'' = \Ic_{\Cb^{2 \times 2}}$ and $\Cc''(\cdot) = \tr(\sigma_z\cdot)\tr(\sigma_z\tau)$. If we then compute the reachable subspace of the model $(\Ac'',\Cc'',\Sf'')$ we obtain $\Rs_2 = \Span\{I\}$ and hence $\alg(\Rs_2) = \Span\{I\}$ which leads to the reduced model $(\check{\Ac},\check{\Cc},\check{\Sf})$ defined on $\Rb$ and with $\check{\Sf} = \{1\}$, $\check{\Ac} = 1$, $\check{\Cc} = \tr(\sigma_z\tau)$. 
    
    This simple example  shows that two steps are not always sufficient to retrieve the minimal model that our method can produce. 
\end{example}

\begin{remark} 
     We shall notice that in Algorithm \ref{algo:composed}, every application of the subroutines, Algorithms \ref{algo:reachable} and \ref{algo:observable}, provides a pair of injection and reduction maps. Let us name them for convenience $\{(\rs_i, \es_i)\}_{i=0,1,\dots}$ and notice that each of those maps is CPTP. Necessarily, both the composition of all the injections, say $\es_* = \es_0 \es_1 \dots$ and of all the reductions $\rs_* = \dots \rs_1 \rs_0$ are CPTP maps, since they are compositions of CPTP maps. Also, $\rs_*:\As\to\check{\As}$ and $\es_*:\check{\As}\to\As$. Then, $\es_*\rs_*$ is a CPTP projection onto a distorted algebra. This shows that there exists a distorted algebra that allows for a CPTP projection (or state extension) onto it, such that the reduction of the model on this distorted algebra provides the same reduced model in a single step as Algorithm \ref{algo:composed} returns in an iterative manner. The question of determining this distorted algebra in a more efficient manner than the one proposed here remains an open problem. 
\end{remark}

\begin{remark}
    Note that, while we were able to prove that Algorithms \ref{algo:reachable} and \ref{algo:observable} provide the minimal algebra that supports the reduced model in the cases where $\Cc=\Ic$ and $\Sf=\Df(\Hc)$ respectively, proving that Algorithm \ref{algo:composed} reaches an algebra of minimal dimension that supports the reduced model remains an open problem. 
    Furthermore one should also note that Proposition \ref{prop:reduction} only provides a sufficient condition that ensures the reduced model is a valid QSO model. In general, it is not necessary that the reduction and injection maps $\Rc$ and $\Jc$ are CPTP. Finding necessary and sufficient conditions for the reduced model to be a valid QSO model also remain an open problem.   
    These problems will be the focus of future work. 
\end{remark}

\section{Examples}\label{sec:examples}
\subsection{Quantum walks: Reduction of Grover's algorithm}
\label{sec:grover}

Grover's algorithm \cite{grover,nielsen_chuang_2010} is a quantum algorithm that solves the unstructured search problem: given a set of $N$ elements and a query function $f(j):\{0,1,\dots,N-1\}\to\{0,1\}$ we want to find an element $j\in\{0,1,\dots,N-1\}$ such that $f(j)=1$. This algorithm is well renowned in the quantum computing community as it provides a quadratic speed-up with respect to its classical counterpart and is also a prototypical example of quantum algorithms built with quantum walks. 

Grover's algorithm can be modeled \cite{cdc2022} as a QSO model defined on $\Bf(\Hc) \simeq \Cb^{N\times N}$: 
\[\begin{cases}
    \rho(t+1) = RO\rho(t)O^\dag R^\dag\\
    \bm p(t) = \diag(\rho(t))
\end{cases}\quad \rho_0 = \ketbra{\psi}{\psi}.\] The initial condition is $\rho_0 = \ketbra{\psi}{\psi}$ where  $\ket{\psi} = \frac{1}{\sqrt{N}}[1,\dots,1]^T\in\Cb^{N}$. The evolution of Grover's algorithm is composed of two unitary evolutions in sequence: first, the oracle $O$ is applied, i.e. a unitary operation such that $O\ket{j} = (-1)^{f(j)}\ket{j}$; second, a reflection $R = 2\ketbra{\psi}{\psi} - I$ is applied. This leads to the discrete-time dynamics $\Ac(\cdot) = RO \cdot O^\dag R^\dag$. The output quantity of interest - the one we want the reduced model to reproduce -  is the population of the state $\rho$, i.e. its diagonal in the standard basis, $\Cc(\cdot) = \diag(\cdot) = \sum_{i=0}^{N-1} \ketbra{i}{i}\cdot\ketbra{i}{i}$. 

To derive the reachable subspace we resort to the approach taken in \cite[Sec. 6.1.3]{nielsen_chuang_2010}. Let us define the set of indexes $S = \{j|f(x)=1\}$ and let $M=|S|$. Define also the two states \[\ket{\alpha} \equiv (N-M)^{-1/2} \sum_{j\notin S} \ket{j},\quad \ket{\beta} \equiv (M)^{-1/2} \sum_{j\in S} \ket{j}\] such that $\ket{\psi} = \alpha_0\ket{\alpha} + \beta_0\ket{\beta}$, with $\alpha_0=\sqrt{(N-M)/N}$ and $\beta_0=\sqrt{M/N}$. Performing the required calculations one can then observe that $U^t\ket{\psi} = \alpha(t)\ket{\alpha} + \beta(t)\ket{\beta}$ with $\alpha(t)=\cos(\frac{2t+1}{2}\theta)$, $\beta(t)=\sin(\frac{2t+1}{2}\theta)$ and $\theta = 2\arccos\left(\sqrt{\frac{N-M}{M}}\right)$. In \cite{cdc2022} it has been shown that assuming $M\neq0,N/2,N$, the first three time instants are sufficient to generate linearly independent operators, and hence $\Rs = \Span\{\ketbra{\alpha}{\alpha}, \ketbra{\beta}{\beta}, \ketbra{\alpha}{\beta}+\ketbra{\beta}{\alpha}\}$. 

One can then observe that $$\alg(\Rs) = \Span\{\ketbra{\alpha}{\alpha}, \ketbra{\beta}{\beta}, \ketbra{\alpha}{\beta}, \ketbra{\beta}{\alpha}\}$$ which does not have full support. In fact, let $U$ be any unitary matrix such that $U^\dag\ket{\alpha} = \ket{0}$ and $U^\dag\ket{\beta} = \ket{1}$, then $\alg(\Rs) = U (\Cb^{2\times 2}\bigoplus \zero_R) U^\dag$ with $\zero_R\in\Cb^{(N-2)\times (N-2)}$. We can then define the non-square isometry $V= [I_2 | \zero_{2\times N-2}] \in\Cb^{2\times N}$ to obtain the reduction and injection maps $\rs(\cdot) = VU^\dag \cdot U V^\dag$ and $\es(\cdot) = UV^\dag \cdot VU^\dag$, which are CPTP over their support. 

These maps allow us to determine the reduced model. Let $\check{\As} = \Cb^{2\times 2}$ and let us denote $\ket{0_2},\ket{1_2}\in\Cb^2$ the standard basis of $\Cb^2$.
The reduced model then takes the form 
\[\begin{cases}
    \check{\rho}(t+1) = \check{U}\rho(t)\check{U}\\
    \bm p(t) = \sum_{i\in S}\ketbra{i}{1_2}\cdot\ketbra{1_2}{i} \sum_{i\notin S} \ketbra{i}{0_2}\cdot\ketbra{0_2}{i}
\end{cases}\] with initial condition 
\[\check{\rho}_0 = \rs(\rho_0) = \frac{1}{N}\begin{bmatrix}
    N-M&\sqrt{(N-M)M}\\\sqrt{(N-M)M}&M
\end{bmatrix}.\]
and unitary matrix \[\check{U} =  \frac{N-2M}{N}I_2 -i\frac{\sqrt{(N-M)M}}{N}\sigma_y.\] Verifying that the reduced model is indeed observable is left to the reader. 

This example shows that Grover's algorithm can be efficiently simulated using a single qubit. Notice however that this does not mean that the unstructured search problem can be solved using a single qubit. This is because in order to compute the change of basis $U$ that reduces the model, one needs to know the set $S$, which is the solution to the unstructured search problem.  

\subsection{Open systems: system-environment with a qubit interface}

Consider a system composed of three finite-dimensional interacting subsystems: the {\em system} $\Hc_S$, the qubit {\em interface} $\Hc_I=\Cb^2$ and the {\em environment}, $\Hc_E$, $\Hc = \Hc_S\otimes\Hc_I\otimes\Hc_E$. 

Consider then the following QSO model:
\[\begin{cases}
    \rho(t+1) = \Ec(U\rho(t)U^\dag)\\
    \tau(t) = \tr_{I,E}(\rho(t)) 
\end{cases}\quad \rho_0\in\Df(\Hc)\]
The CPTP dynamics for our initial QSO model is composed of two parts, a unitary evolution $U\cdot U^\dag$ and a dissipative evolution $\Ec(\cdot)$. The unitary evolution is obtained by integrating over a time $\Delta t$ the Hamiltonian $H = H_S\otimes \sigma_z\otimes I_E + I_S\otimes \sigma_z\otimes H_E$ where $H_S, H_E$ can be any Hamiltonians for the system and  the environment, respectively, thus obtaining $U = e^{-iH\Delta t}$. The dissipative part of the dynamics is a probabilistic bit flip over the qubit interface and a generic probabilistic unitary error over the system and environment, i.e. $\Ec(\cdot) = p_I\sigma_x^{(I)}\cdot\sigma_x^{(I)} + p_{S} U_S \cdot U_S^\dag + p_{E}  U_E\cdot  U_E^\dag + (1-p_I-p_S-p_E) I \cdot I$ where $\sigma_x^{(I)} =  I_{S} \otimes \sigma_x \otimes I_{E}$, while $U_S, U_E$ can be any unitary acting only on the system and environment respectively, and $p_I, p_S, p_E\in[0,1]$, such that $p_S+p_I+p_E\leq1$.  

We are interested in reducing the QSO model we just described for any initial condition $\rho_0\in\Df(\Hc)$. We also consider as output map the partial trace over the interface and the environment: let $\{S_i\}$ be an orthonormal basis for $\Bf(\Hc_S)$, e.g. the generalized Gell-Mann matrices, then $\Cc(\cdot) = \tr_{I,E}(\cdot) = \sum_i S_i \tr(S_i\otimes I_I \otimes I_E \cdot)$. As described in Subsection \ref{sec:non-observable_subspace}, the non-observable subspace is orthogonal to
\(\Ns^\perp = \Span\{\Ac^{\dag t}[S_i\otimes I_{I,E}],\, t\geq0\}.\)

In order to compute $\Ns^\perp$, let us start by noticing that, because of the fact that $[H_S\otimes \sigma_z\otimes I_E, I_S\otimes \sigma_z\otimes H_E]=0$ we can write $U = U_2 U_1$ with $U_1 = e^{-i H_S\otimes \sigma_z\Delta t}\otimes I_E$ and $U_2 = I_S\otimes e^{-i \sigma_z\otimes H_E\Delta t}$. Moreover, since $\sigma_z = \ketbra{0}{0}-\ketbra{1}{1}$ we have $$U_1 = [e^{-i H_S\Delta t}\otimes \ketbra{0}{0} + e^{iH_S\Delta t}\otimes \ketbra{1}{1}]\otimes I_E,$$ $$U_2 = I_S\otimes [\ketbra{0}{0}\otimes e^{-i H_E\Delta t} + \ketbra{1}{1}\otimes e^{i H_E\Delta t}].$$
One can then observe that $\Ec^\dag$ leaves $\Span\{S_i\otimes I_{I,E}\}$ invariant since $\Ec^\dag(S_i\otimes I_{I,E}) = p_S U_S (S_i\otimes I_{I,E}) U_S^\dag + (1-p_S) S_i\otimes I_{I,E}$ and $U_S (S_i\otimes I_{I,E}) U_S^\dag\in\Span\{S_i\otimes I_{I,E}\}$ for any unitary $U_S$ that acts only on the system. Similarly, $U_2^\dag \cdot U_2$ acts as the identity over $\{S_i\otimes I_{I,E}\}$. $U_1^\dag \cdot U_1$, instead, makes the diagonal of the interface qubit, observable at one step. In fact, 
\begin{align*}
    U_1 (S_i\otimes I_{I,E}) U_1^\dag &= [e^{-iH_S\Delta t}S_ie^{iH_S\Delta t}\otimes\ketbra{0}{0} + \\
    &\qquad e^{iH_S\Delta t}S_ie^{-iH_S\Delta t}\otimes\ketbra{1}{1} ]\otimes I_E
\end{align*} 
which is contained in $\Span\{S_i\otimes\ketbra{j}{j}\otimes I_E, \, j=0,1\}.$
In the second step then, one can verify that $\Ec^\dag(\cdot)$, $U_2^\dag \cdot U_2$ and $U_1^\dag\cdot U_1$ leave $\Span\{S_i\otimes\ketbra{j}{j}\otimes I_E, \, j=0,1\}$ invariant and thus  one obtains $\Ns^\perp = \Span\{S_i\otimes\ketbra{j}{j}\otimes I_E,\, i=0,1\}$ which is also an algebra.  Let $W$ be the unitary swap matrix defined by  $W^\dag(A\otimes B)W = B\otimes A$ for all $A\in\mathcal{B}(\mathcal{H}_S)$ and $B\in\mathcal{B}(\mathcal{H}_I)$, then 
$$\Ns^\perp = (W\otimes I_E)[\Bf(\Hc_S)\otimes I_E \bigoplus \Bf(\Hc_S)\otimes I_E](W^\dag\otimes I_E).$$ This leads, after some manipulation, to injection and reduction maps defined as \[\es[\check{X}] = \check{X} \otimes \frac{I_E}{\dim(\Hc_E)}\] for all $\check{X} = X\otimes(\alpha\ketbra{0}{0}+\beta\ketbra{1}{1})$ and $X\in\Bf(\Hc_{S})$ and  
\[\rs(X) = \sum_{j=0,1} \tr_E\left(I_S\otimes\bra{j}\otimes I_E X I_S\otimes \ket{j}\otimes I_E\right)\otimes\ketbra{j}{j}.\]

These maps lead to the reduced model defined over the algebra $\check{\As} = \Span\{S_i\otimes\ketbra{j}{j}\} = W(\Bf(\Hc_S)\bigoplus\Bf(\Hc_S))W^\dag \subseteq \Bf(\Hc_S\otimes \Hc_I)$ and dynamics
\[\begin{cases}
    \check{\rho}(t+1) = \check{\Es}(\check{U}\check{\rho}(t)\check{U}^\dag)\\
    \tau(t) = \tr_I(\check{\rho}(t))
\end{cases}\quad \check{\rho}_0 \in\Df\left(\check{\As}\right)\]
where 
$\check{U} = e^{-i H_S\otimes \sigma_z\Delta t}$ and $\check{\Ec}(\cdot) = p_I\sigma_x^{(I)}\cdot\sigma_x^{(I)} + p_{S} U_S \cdot U_S^\dag + (1-p_I-p_S) I \cdot I$ where $\sigma_x^{(I)} =  I_{S} \otimes \sigma_x$.

This example shows that regardless of the Hamiltonians $H_S$ and $H_E$, the unitaries $U_S$ and $U_E$ and the initial state $\rho_0$, the action of the environment can be completely removed  because it is non-observable from the system state. This is due to the fact that the interface qubit completely decouples the action of the environment with that of the system.  Moreover, the interface qubit behaves classically, since its coherences do not influence in any manner the evolution of the system state. 

\section{Conclusions and outlook}
In this work, we develop a general framework and foundational tools for a theory of CPTP model reduction for quantum dynamics.
While finding minimal linear models can be done by relying on linear system theory, as soon as positivity constraints are imposed the problem becomes challenging, and still open even for classical linear systems \cite{benvenuti}. In quantum engineering, effective reduced models have been proposed for control and filtering purposes, but their properties are hard to characterize \cite{azouit2017towards,rouchon2015efficient}. 
We here propose a systematic way to construct reduced models for quantum dynamics that are guaranteed to be CPTP by construction: the map performing the reduction is essentially a projection onto a distorted algebra, constructed either from the reachable or the observable subspaces, obtained via (CPTP factors of) the dual of a conditional expectation. Alternating the two reductions, reachable and observable, one can exploit the knowledge of both the initial conditions and the output of interest to decrease the size of the description. An open problem remains: is the output of the procedure a CPTP model of minimal dimension for the dynamics of interest? While preliminary numerical tests  suggest this is the case, further work is needed to prove optimality in general.

Other extensions of the presented method, as announced in the introduction, include continuous-time models (described by quantum dynamical semigroups \cite{alicki-lendi}) that have been treated in \cite{grigoletto2025exactmodelreductioncontinuoustime} and where more involved examples are considered, dynamics that include measurement processes (quantum filtering equations and quantum trajectories \cite{bouten2007introduction,benoist2023limit}) that have been treated in \cite{letters,grigoletto2025quantummodelreductioncontinuoustime}, controlled dynamics and, crucially, approximate methods. In fact, for practical applications where the available model is noisy or uncertain, a relaxation of the exact reductions we build here might be more appropriate and lead to smaller models.

\section{Acknowledgements}
T.G. and F.T. wish to thank Lorenza Viola and Augusto Ferrante for motivating and stimulating discussions on the topics of this work.

\appendices
\section{Connection with Model Reduction for HMMs}\label{sec:hmm}
In \cite{tac2023} we proposed an algorithm for reduction of {\em classical} HMMs to construct a distorted algebra  starting from the effective subspace, which provides the reduced model in a single step. This solution can be extended to the case of QSO models with the tools we presented here, but, only works if the projection onto the distorted algebra leaves the subspace $\Ns\cap\Rs$ invariant: otherwise, one is left with a  reduction on just the reachable algebra. This limit  is overcome in this work, in a general non-commutative setting, by introducing an iterative reduction in Algorithm \ref{algo:composed}. The following example showcases this fact.

\begin{example}
Consider a QSO model defined on a four-dimensional abelian algebra $\Bs\subseteq\Cb^{4\times 4}$, $\Bs=\Span\{\ketbra{j}{j}, j=0,1,2,3\}$. Let us consider a trivial dynamics $\Ac(\cdot)=I_4\cdot I_4$. Let then consider the output map \[\Cc(\cdot) = \sum_{j=0,1}\ketbra{j}{j} \bra{\phi_j}\cdot\ket{\phi_j}\] where $\ket{\phi_j} = \ket{j}+\ket{j+2}$. Lastly, let us consider the set of initial conditions 
\begin{align*}
    \Sf = \left\{ I_4/4, 
    \frac{1}{7} \begin{bmatrix} 3&&&\\&0&&\\&&2&\\&&&2\end{bmatrix},
    \frac{1}{20} \begin{bmatrix} 7&&&\\&6&&\\&&3&\\&&&4
    \end{bmatrix}\right\}.
\end{align*}
    
Clearly, $\Ns=\ker\Cc=\Span\{\ketbra{0}{0}-\ketbra{2}{2}, \ketbra{1}{1}-\ketbra{3}{3}\}$ and $\Rs = \Span\{\Sf\}$. With these two subspaces, one can observe that the intersection between the reachable and the non-observable subspace is 
\[\Ns\cap\Rs = \Span\{2\ketbra{0}{0}+\ketbra{1}{1}-2\ketbra{2}{2}-\ketbra{3}{3}\}\] and as an effective subspace one can pick 
\[\Es=\Span\left\{I,\begin{bmatrix}
    5&&&\\&-7&&\\&&1&\\&&&1
\end{bmatrix}\right\}.\]
Since we are in an abelian algebra any positive definite $\sigma\in\Es$ with full support guarantees the existence of a CPTP projection onto the related distorted algebra. In particular, we can choose $\sigma = I_4/4$ thus obtaining $\alg(\Es) = \Span\{\ketbra{0}{0},\ketbra{1}{1},\ketbra{2}{2}+\ketbra{3}{3}\}$ and the state extension 
\[\SE_{\alg(\Es)}(\cdot) = \sum_{j=0,1}\ketbra{j}{j}\cdot\ketbra{j}{j} + \frac{E}{2}\sum_{k=2,3}\bra{k}\cdot\ket{k}\]
where $E=\ketbra{2}{2}+\ketbra{3}{3}$. Unfortunately, in this particular case, we have that $\SE_{\alg(\Es)}(\Rs\cap\Ns)\nsubseteq\Rs\cap\Ns$ and thus reducing the model using the state extension onto $\alg(\Es)$ can not work (does not provide the correct output). For this reason, Algorithm 1 proposed in \cite{tac2023} must resort to the reduction to $\alg(\Rs) = \Bs$ in order to work, which means no reduction is possible.  

On the contrary, the algorithm we proposed here provides a reduced model. Let us start by observing that the first step leads to no reduction since $\alg(\Rs) = \Bs$. In the second step, however, we have that $\Ns^\perp = \Span\{\ketbra{0}{0}+\ketbra{2}{2},\ketbra{1}{1}+\ketbra{3}{3}\}$ which is an algebra. This leads to the reduced model defined over $\check{\As} = \Span\{\ketbra{0_2}{0_2},\ketbra{1_2}{1_2}\} \subset \Cb^{2\times 2}$, where $\ket{0_2},\ket{1_2}\in\Cb^2$ form the standard basis for $\Cb^{2}$ and with trivial dynamics $\check{\Ac}=I_2\cdot I_2$, output map $\check{\Cc}(\cdot) = \sum_{j=0,1} \ketbra{j}{j_2}\cdot\ketbra{j_2}{j}$ and the set of initial conditions reduces to \[\check{\Sf} = \left\{ I_2/2,\frac{1}{7}\begin{bmatrix}5&\\&2\end{bmatrix}, \frac{1}{20}\begin{bmatrix}10&\\&10\end{bmatrix}\right\}.\]

Interestingly enough the dimension of the reduced model obtained in this way is even smaller than the algebra constructed from the effective subspace, as proposed in \cite{tac2023} $\alg(\Es)$, and has in fact the same dimension as the effective subspace. This example thus shows, that, not only the algorithm here proposed works even in conditions where the algorithm proposed in \cite{tac2023} works in a non-optimal way (when $\SE_{\alg(\Es)}(\Rs\cap\Ns)\nsubseteq\Rs\cap\Ns$)   but in certain cases can provide a smaller reduction.  
\end{example}

\section{Instrumental results}
\label{sec:instrumental_results}

The following proposition provides both a way to construct distorted algebras and an intuitive connection between algebras and distorted algebras.  
\begin{proposition}
\label{prop:distorted_algebra}
Let $\mathcal{S}\subset\Bf(\Hc)$ be a set and let $\sigma\in\Hf(\Hc)$ be an Hermitian operator with full support, i.e. $\supp(\sigma)=\Hc$.
Then \[{\alg}_\sigma(\Sc) = \Dc_\sigma(\alg(\Dc_\sigma^{-1}(\mathcal{S}))).\]
Moreover, if $\supp(\Sc) = \Hc$, then $\sigma\in\alg_\sigma(\Sc)$.
\end{proposition}
\begin{proof}
For convenience, let us define, $\As_\sigma\equiv\alg_\sigma(\Sc)$, $\As = \Dc_\sigma^{-1}(\As_\rho)$ and, on the right hand side $\Tilde{\As} \equiv \alg(\Dc_\sigma^{-1}(\Sc))$, $\Tilde{\mathscr{A}}_\sigma \equiv\Dc_{\rho}(\Tilde{\As})$. By definition, we have that $\Tilde{\As}$ is a $*$-algebra, and $\As_\sigma$ is a $\sigma$-distorted $*$-algebra. 

We  start by showing that $\Tilde{\As}_\sigma$ is also a $\sigma$-distorted $*$-algebra and that $\Sc\subseteq\Tilde{\As}_\sigma$. 
Consider then $X,Y\in\Tilde{\As}_\rho$ and $\alpha,\beta\in\Cb$. We have:
\begin{itemize}
    \item $\alpha X+\beta Y\in\Tilde{\As}_\sigma$,  trivially from linearity of operator spaces;
    \item $X^\dag\in\Tilde{\As}_\sigma$, since $X = \sigma^{\um} \bar{X} \sigma^{\um}$ for some $\bar{X}\in\Tilde{\As}$ and thus $X^\dag = \sigma^{\um} \bar{X}^\dag \sigma^{\um}\in\Tilde{\As}_\sigma$, since $\bar{X}^\dag\in\Tilde{\As}$;
    \item $X\cdot_\sigma Y \in\Tilde{\As}_\sigma$ since also $Y = \sigma^{\um} \bar{Y} \sigma^{\um}$ for some $\bar{Y}\in\Tilde{\As}$ and $X\cdot_\sigma Y = \sigma^{\um} \bar{X} \cancel{\sigma^{\um}\sigma^{-1}\sigma^{\um}} \bar{Y} \sigma^{\um} = \sigma^{\um} \bar{X} \bar{Y} \sigma^{\um}\in\Tilde{\As}_\sigma$ since $\bar{X}\bar{Y}\in\Tilde{\As}$;
    \item any sequence of linear combinations, adjoints, and multiplications of elements in $\Tilde{\As}_\sigma$ must also be in $\Tilde{\As}_\sigma$. 
\end{itemize}
This proves that $\Tilde{\As}_\sigma$ is a $\sigma$-distorted $*$-algebra. Moreover, by observing that $\alg(\Sc)\supseteq\Sc$ by definition, we have that $\Tilde{\As}\supseteq \Dc_{\sigma^{-1}}(\Sc)$ and thus $\Tilde{\mathscr{A}}_\sigma \supseteq \Dc_\sigma(\Dc_\sigma^{-1}(\Sc)) = \Sc$. We thus have that $\Tilde{\As}_\sigma$ is a $\sigma$-distorted $*$-algebra that contains $\Sc$. Since, by definition, $\alg_\sigma(\Sc)$ is the smallest $\sigma$-distorted $*$-algebra that contains $\Sc$ we have proven that $\Sc\subseteq\As_\sigma\subseteq\Tilde{\As}_\sigma$. 

We shall notice that $\Dc_\sigma(\cdot)$ is invertible and hence proving that $\As_\sigma = \Tilde{\As}_\sigma$ is equivalent to prove $\As = \Tilde{\As}$. Then, to prove  $\As_\sigma \supseteq \Tilde{\As}_\sigma$ we can show that $\As$ is also a $*$-algebra that contains $\Dc_\sigma^{-1}(\Sc)$ and, since by definition we have that $\Tilde{\As}$ is the smallest $*$-algebra that contains $\Dc_\sigma^{-1}(\Sc)$, this implies that $\Dc_\sigma^{-1}(\Sc)\subseteq\Tilde{\As}\subseteq\As$ and hence $\Sc\subseteq\Tilde{\As}_\sigma\subseteq\As_\sigma$. 

The proof now follows closely the steps of the first part of the proof. Consider $X, Y\in\As$ and $\alpha,\beta\in\Cb$. Then we have:
\begin{itemize}
    \item $\alpha X+\beta Y\in\As$ by the linearity of the operator spaces;
    \item $X^\dag\in\As$ since $X = \sigma^{-\um} \bar{X}\sigma^{-\um}$ for some $\bar{X}\in\As_\sigma$ and thus $X^\dag = \sigma^{-\um} \bar{X}^\dag \sigma^{-\um}\in\As$ since $\bar{X}^\dag\in\As_\sigma$;
    \item $XY\in\As$ since $Y = \sigma^{-\um} \bar{Y} \sigma^{-\um}$ for some $\bar{Y}\in\As_\sigma$ and $XY = \sigma^{-\um} \bar{X} \sigma^{-\um} \sigma^{-\um} \bar{Y} \sigma^{-\um} = \sigma^{-\um} \bar{X} \cdot_\sigma \bar{Y} \sigma^{-\um}\in \As$ since $\bar{X}\cdot_\sigma\bar{Y}\in\As_\sigma$;
    \item any sequence of sums, multiplications and adjoints of elements in $\As$ is in $\As$ as well.
\end{itemize}
This proves that $\As$ is a $*$-algebra. Now, observing that $\alg_\sigma(\Sc)\supseteq\Sc$ by definition, we have that $\As\supseteq\Dc_\sigma^{-1}(\Sc)$ and hence, for the argument above, it holds $\Sc\subseteq\Tilde{\As}_\sigma\subseteq\As_\sigma$ and thus combining the two inclusions, we have $\Sc\subseteq \As_\sigma = \Tilde{\As}_\sigma$ and $\As = \Tilde{\As}$.

To conclude the proof we can observe that $I\in\Tilde{\As}$, (see e.g. \cite[Theorem 2.5]{beny2015algebraic}) and $\Dc_\sigma(I)=\sigma\in \Dc_\sigma(\As) = \Tilde{\As}_\sigma$ and thus $\sigma\in\Tilde{
\As}_\sigma = \As_\sigma$.
\end{proof}

\begin{lemma}
\label{lem:little_blocks}
    Let consider $\Vs\subseteq\Bf(\Hc)$ such that $\alg\Vs=\Bf(\Hc)$. Then $\sigma\comp\alg(\Dc_\sigma^{-1}(\Vs))$ if and only if $\alg(\Dc_\sigma^{-1}(\Vs)) = \Bf(\Hc)$.
\end{lemma}
\begin{proof}
    The ``if'' implication is trivial since any operator $\sigma$ is compatible with $\Bf(\Hc)$. We  now prove the opposite implication  by contradiction.  Let us assume that $\alg(\Dc_\sigma^{-1}(\Vs))\subsetneq\Bf(\Hc)$. Then it must admit a nontrivial decomposition, $\alg(\Dc_\sigma^{-1}(\Vs)) = U (\bigoplus_\ell \Bf(\Hc_{S,\ell})\otimes  I_{m_\ell}) U^\dag$, meaning that either it admits at least two orthogonal components ($\ell$) or one of the identity factors is of dimension greater than one. Thus all $X\in\alg(\Dc_\sigma^{-1}(\Vs))$ can be written as $X = U (\bigoplus_\ell X_{S,\ell}\otimes I_{m_\ell} ) U^\dag$. Moreover, from assumptions we have that  $\sigma\comp\alg(\Dc_\sigma^{-1}(\Vs))$ and thus $\sigma$ has the form $\sigma=U\left( \bigoplus_\ell \sigma_{S,\ell}\otimes \tau_{F,\ell} \right) U^\dag$. Then, since by definition we have $\Vs\subseteq\alg_\sigma\Vs = U\left(\bigoplus_\ell \sigma_{S,\ell}^\um \Bf(\Hc_{S,\ell})\sigma_{S,\ell}^\um\otimes \tau_{F,\ell}\right)U^\dag$, we can say that for the basis elements $\Vs = \Span\{V_i\}$ we can write $V_i = U (\bigoplus_\ell  V_{S,\ell}^i  \otimes I_{m_l}) U^\dag$ with $V_{S,\ell}\in\Bf(\Hc_{S,\ell})$ for all $l$. However, because of the block-diagonal structure of $\{V_i\}$ and the fact that addition, multiplication, and adjoint action leave the block-diagonal structure invariant, we have that $\alg\Vs = U \left(\bigoplus_\ell \alg\{V_{S,\ell}^i\}_i \otimes \alg(\tau_{F,\ell})\right) U^\dag$ where $\alg\{V_{S,\ell}^i\}_i \equiv \alg\{ \sigma_{S,\ell}^\um X_{S,\ell}\sigma_{S,\ell}^\um\}\subseteq \Bf(\Hc_{S,\ell})$. This, however, implies that $\alg\Vs\subsetneq\Bf(\Hc)$, which is a contradiction.
\end{proof}

\begin{lemma}
\label{lem:distorted_algebra_inclusion}
    Consider an operator space $\Vs\subseteq\Bf(\Hc)$ with full support, an positive-definite operator $\mu\in\Bf(\Hc)$ such that $\mu\comp\alg_\mu\Vs$ and consider also another positive-definite operator $\sigma\in\Bf(\Hc)$. If $\Dc_\mu^{-1}(\sigma)\in\alg(\Dc_{\mu}^{-1}(\Vs))$ then $\alg\left(\Dc_\sigma^{-1}(\Vs)\right)\subseteq\Dc_\sigma^{-1}\left(\alg_\mu\Vs\right).$
\end{lemma}
\begin{proof}
    Let us define a basis for $\Vs \equiv \Span\{V_i\}$. Since $\Dc_{\sigma}(\cdot)$ is a linear and invertible map, we have that $\Dc_{\sigma}^{-1}(\Vs) \equiv \Span\{\Hat{V}_i\}$, where $\Hat{V}_i = \sigma^{-\um}{V}_i\sigma^{-\um}$. Similarly, we have that $\Dc_\mu^{-1}(\Vs) = \Span\{\Tilde{V}_i\}$ where $\Tilde{V}_i = \mu^{-\um}V_i\mu^{-\um}$ or, equivalently, $V_i = \mu^{\um}\Tilde{V}_i\mu^{\um}$. Combining the two we obtain
    $\Hat{V}_i = \sigma^{-\um}\mu^\um \Tilde{V}_i \mu^\um\sigma^{-\um}$. By linearity of the operator space we have that the same holds for every element in $\Dc_\sigma^{-1}(\Vs)$, i.e. for every $\Hat{X}\in\Dc_\sigma^{-1}(\Vs)$, there exist $\Tilde{X}\in\Dc_\mu^{-1}(\Vs)$ such that $\Hat{X} = \sigma^{-\um}\mu^\um \Tilde{X} \mu^\um\sigma^{-\um}$. Let us then denote $\Tilde{\sigma} \equiv \Dc_\mu^{-1}(\sigma)$, or equivalently $\sigma = \mu^{\um}\Tilde{\sigma}\mu^{\um}$, from hypothesis we have that $\Tilde{\sigma}\in\alg(\Dc_\mu^{-1}(\Vs))$ and also $\Tilde{\sigma}^{-1}\in\alg(\Dc_\mu^{-1}(\Vs))$. 
    
    We proceed to prove that elements in $\alg\left(\Dc_\sigma^{-1}(\Vs)\right)$ are contained in $\Dc_\sigma^{-1}\left(\alg_\mu\Vs\right) = \Dc_\sigma^{-1}\left(\Dc_\mu(\alg(\Dc_\mu^{-1}(\Vs)))\right)$:
    first notice that $\Hat{X}^{\dag}\in\alg(\Dc_\sigma^{-1}(\Vs))$, where  $\Hat{X}^\dag =\sigma^{-\um}\mu^\um \Tilde{X}^\dag \mu^\um\sigma^{-\um}$ and thus, since $\Tilde{X}^\dag\in\alg(\Dc_\mu^{-1}(\Vs))$, we have that $\Hat{X}^\dag \in \Dc_\sigma^{-1}\left(\Dc_\mu(\alg(\Dc_\mu^{-1}(\Vs)))\right)$.     
    Let then consider a second operator $\Hat{Y}\in\Dc_\sigma^{-1}(\Vs)$, for which $\Hat{Y} = \sigma^{-\um}\mu^\um \Tilde{Y} \mu^\um\sigma^{-\um}$ where $\Tilde{Y}\in\Dc_\mu^{-1}(\Vs)$. We then have that $\Hat{X}+\Hat{Y}\in\alg(\Dc_\sigma^{-1}(\Vs))$ where, by linearity, $\Hat{X}+\Hat{Y} = \sigma^{-\um}\mu^\um (\Tilde{X}+\Tilde{Y}) \mu^\um\sigma^{-\um}$ and thus, since $\Tilde{X}+\Tilde{Y}\in \alg(\Dc_\mu^{-1}(\Vs))$ it holds that $\Hat{X}+\Hat{Y}\in \Dc_\sigma^{-1} \left(\Dc_\mu(\alg(\Dc_\mu^{-1}(\Vs)))\right)$. 
    
    Finally, we can consider the product 
    \begin{align*}
        \Hat{X}\Hat{Y} &= \sigma^{-\um}\mu^\um \Tilde{X} \mu^\um\sigma^{-\um}\sigma^{-\um}\mu^\um \Tilde{Y} \mu^\um\sigma^{-\um}\\ 
        &= \sigma^{-\um}\mu^\um \Tilde{X} \mu^\um \sigma^{-1} \mu^\um \Tilde{Y} \mu^\um\sigma^{-\um}\\
        &= \sigma^{-\um}\mu^\um \Tilde{X} \cancel{\mu^\um \mu^{-\um}}\Tilde{\sigma}^{-1}\cancel{\mu^{-\um} \mu^\um} \Tilde{Y} \mu^\um\sigma^{-\um}\\
        &= \sigma^{-\um}\mu^\um \Tilde{X} \Tilde{\sigma}^{-1} \Tilde{Y} \mu^\um\sigma^{-\um}.
    \end{align*}
    We can thus notice that $\Hat{X}\Hat{Y}\in\alg(\Dc_\sigma^{-1}(\Vs))$ and, since, by assumptions $\Tilde{X},\Tilde{Y}\in\Dc_{\mu}^{-1}(\Vs)$ and $\Tilde{\sigma}^{-1}\in\alg\left(\Dc_{\mu}^{-1}(\Vs)\right)$, we have  $\Tilde{X}\Tilde{\sigma}^{-1}\Tilde{Y}\in \alg(\Dc_\mu^{-1}(\Vs))$ and thus $\Hat{X}\Hat{Y}\in\Dc_\sigma^{-1}\left(\Dc_\mu(\alg(\Dc_\mu^{-1}(\Vs)))\right)$. This concludes the proof.
\end{proof}

\printbibliography{}

\end{document}